\theoremstyle{definition}
\newlength\min@xx
\newcommand*\xxrightarrow[1]{\begingroup
  \settowidth\min@xx{$\m@th\scriptstyle#1$}
  \@xxrightarrow}
\newcommand*\@xxrightarrow[2][]{
  \sbox8{$\m@th\scriptstyle#1$}  
  \ifdim\wd8>\min@xx \min@xx=\wd8 \fi
  \sbox8{$\m@th\scriptstyle#2$} 
  \ifdim\wd8>\min@xx \min@xx=\wd8 \fi
  \xrightarrow[{\mathmakebox[\min@xx]{\scriptstyle#1}}]
    {\mathmakebox[\min@xx]{\scriptstyle#2}}
  \endgroup}
\newcommand{\unitx}{\mathbf{e}_x}
\newcommand{\unity}{\mathbf{e}_y}
\newcommand{\consistent}{\stackrel{c}{=}}
\newcommand{\maxmerge}{\Join}
\newcommand{\rightmerge}{\lhd}
\newcommand{\mergeprod}[2]{\underset{\scriptsize #1}{\stackrel{\scriptsize #2}{\text{\scalebox{1.75}{$\rightmerge$}}}}}
\newcommand{\mono}[0]{\xxrightarrow{MMM}{\text{Mono}}}
\newcommand{\revmono}[0]{\xxrightarrow{MMM}{\text{Mono}^{-1}}}
\newcommand{\petz}[0]{\xxrightarrow{MMM}{\text{Petz}}}
\newcommand{\entropy}[1]{S\left(#1\right)}
\newcommand{\markoventropy}[2]{S_{M}^{#2}\left(#1\right)}
\newcommand{\centeredTikZ}[1]{
\begin{tikzpicture}[scale=0.55, baseline={([yshift=-.55ex]current bounding box.center)}]
#1
\end{tikzpicture}
}
\newcommand{\emptysquare}[2]{\draw[fill=white] (#1, #2) rectangle (#1 +0.5, #2 + 0.5)}
\newcommand{\squaresempty}[4]{
\foreach \x in {1,...,#1}
{
\foreach \y in {1,...,#2}
{
\emptysquare{#3*0.5 - #4*0.25 +0.5*\x - 0.25*\y}{#4*0.5+0.5*\y};
}
}
}
\newcommand{\shearedTikZ}[1]
           {
             \begin{tikzpicture}[x={(1,0)}, y={(-0.5,0.866)}, scale=0.5, baseline={([yshift=0.1ex]current bounding box.center)}]
               #1
             \end{tikzpicture}
           }
\newcommand{\shearedTikZmini}[1]
           {
             \begin{tikzpicture}[x={(1,0)}, y={(-0.5,0.866)}, scale=0.3, baseline={([yshift=0.1ex]current bounding box.center)}]
               #1
             \end{tikzpicture}
           }
\newcommand{\mygrid}[1]
           {
             \foreach \c in {0,1,2}
                      {
                        \draw[gridline] (\c,0) -- (\c,2);
                        \draw[gridline] (0,\c) -- (2,\c);
                      }
              \node[below] at (2,-0.1) () {\scriptsize $#1$};
           }
\newcommand{\mycustomgrid}[3]
           {
             \foreach \x in {1, ..., #1}
                      {
                        \draw[gridline] (\x,1) -- (\x,#2);
                      }
            \foreach \y in {1, ..., #2}
            {
            \draw[gridline] (1, \y) -- (#1, \y);
            }
              \node[below] at (#1,1) () {\scriptsize $#3$};
           }
\newcommand{\nbym}[3]
           {
             \shearedTikZ{
               \foreach \x in {1, ..., #1}
                        {
                          \foreach \y in {1, ..., #2}
                                   {
                                     \node[regular] at (\x, \y) () {};
                                   }
                        }
                        \foreach \x in {1, ..., #1}
                                 {
                                   \draw[gridline] (\x, 1) -- (\x, #2);
                                 }
                                 \foreach \y in {1, ..., #2}
                                          {
                                            \draw[gridline] (1, \y) -- (#1, \y);
                                          }
                                          \node[below, text centered] at (#1,1) () {\scriptsize $#3$};
             }}
\newcommand{\snakeonehooked}[2]{
  \shearedTikZ{
  \node[regular] at (1, 1) (hook) {};
    \node[regular] at (1,0) (leftmiddle) {};
    \node[] at (1.75, 0) (leftin) {};
    \node[] at (2.25, 0) (rightin) {};
    \node[regular] at (3, 0) (rightmiddle) {};
    \draw[gridline] (leftmiddle) -- (leftin);
    \draw[gridline] (rightin) -- (rightmiddle);
    \draw[dotted] (leftmiddle) -- (rightmiddle);
    \draw[gridline] (leftmiddle) -- (hook);
    \node[below] at (1.0, 0) () {\scriptsize $#1\vphantom{-\unitx}$};
    \node[below] at (3, 0) () {\scriptsize $#2\vphantom{-\unitx}$};
    }}
\newcommand{\snakeone}[2]{
  \shearedTikZ{
    \node[regular] at (1,0) (leftmiddle) {};
    \node[] at (1.75, 0) (leftin) {};
    \node[] at (2.25, 0) (rightin) {};
    \node[regular] at (3, 0) (rightmiddle) {};
    \draw[gridline] (leftmiddle) -- (leftin);
    \draw[gridline] (rightin) -- (rightmiddle);
    \draw[dotted] (leftmiddle) -- (rightmiddle);
    \node[below] at (1.0, 0) () {\scriptsize $#1\vphantom{-\unitx}$};
    \node[below] at (3, 0) () {\scriptsize $#2\vphantom{-\unitx}$};
    }}
\newcommand{\snaketwo}[2]{
  \shearedTikZ{
    \begin{scope}[xshift=-0.5cm, yshift=0.866cm]
    \node[regular] at (1,0) (leftmiddleu) {};
    \node[] at (1.75, 0) (leftinu) {};
    \node[] at (2.25, 0) (rightinu) {};
    \node[regular] at (3, 0) (rightmiddleu) {};
    \draw[gridline] (leftmiddleu) -- (leftinu);
    \draw[gridline] (rightinu) -- (rightmiddleu);
    \end{scope}
    \node[regular] at (1,0) (leftmiddle) {};
    \node[] at (1.75, 0) (leftin) {};
    \node[] at (2.25, 0) (rightin) {};
    \node[regular] at (3, 0) (rightmiddle) {};
    \draw[gridline] (leftmiddle) -- (leftin);
    \draw[gridline] (rightin) -- (rightmiddle);
    \node[below] at (1.0, 0) () {\scriptsize $#1\vphantom{-\unitx}$};
    \node[below] at (3.0, 0) () {\scriptsize $#2\vphantom{-\unitx}$};
    \draw[gridline] (leftmiddleu) -- (leftmiddle);
    \draw[gridline] (rightmiddleu) -- (rightmiddle);
    \draw[dotted] (leftmiddle) -- (rightmiddle);
    \draw[dotted] (leftmiddleu) -- (rightmiddleu);
    }}
\newcommand{\snaketwohooked}[2]{
  \shearedTikZ{
    \node[regular] at (1,2) (hook) {};
    \begin{scope}[xshift=-0.5cm, yshift=0.866cm]
    \node[regular] at (1,0) (leftmiddleu) {};
    \node[] at (1.75, 0) (leftinu) {};
    \node[] at (2.25, 0) (rightinu) {};
    \node[regular] at (3, 0) (rightmiddleu) {};
    \draw[gridline] (leftmiddleu) -- (leftinu);
    \draw[gridline] (rightinu) -- (rightmiddleu);
    \end{scope}
    \node[regular] at (1,0) (leftmiddle) {};
    \node[] at (1.75, 0) (leftin) {};
    \node[] at (2.25, 0) (rightin) {};
    \node[regular] at (3, 0) (rightmiddle) {};
    \draw[gridline] (leftmiddle) -- (leftin);
    \draw[gridline] (rightin) -- (rightmiddle);
    \node[below] at (1.0, 0) () {\scriptsize $#1\vphantom{-\unitx}$};
    \node[below] at (3.0, 0) () {\scriptsize $#2\vphantom{-\unitx}$};
    \draw[gridline] (leftmiddleu) -- (leftmiddle);
    \draw[gridline] (rightmiddleu) -- (rightmiddle);
    \draw[dotted] (leftmiddle) -- (rightmiddle);
    \draw[dotted] (leftmiddleu) -- (rightmiddleu);
    \draw[gridline] (leftmiddleu) -- (hook);
    }}
\newcommand{\snakethree}[2]{
  \shearedTikZ{
  \begin{scope}[xshift=-1cm, yshift=1.732cm]
    \node[regular] at (1,0) (leftmiddletop) {};
    \node[] at (1.75, 0) (leftintop) {};
    \node[] at (2.25, 0) (rightintop) {};
    \node[regular] at (3, 0) (rightmiddletop) {};
    \draw[gridline] (leftmiddletop) -- (leftintop);
    \draw[gridline] (rightintop) -- (rightmiddletop);
    \end{scope}
    \begin{scope}[xshift=-0.5cm, yshift=0.866cm]
    \node[regular] at (1,0) (leftmiddleu) {};
    \node[] at (1.75, 0) (leftinu) {};
    \node[] at (2.25, 0) (rightinu) {};
    \node[regular] at (3, 0) (rightmiddleu) {};
    \draw[gridline] (leftmiddleu) -- (leftinu);
    \draw[gridline] (rightinu) -- (rightmiddleu);
    \end{scope}
    \node[regular] at (1,0) (leftmiddle) {};
    \node[] at (1.75, 0) (leftin) {};
    \node[] at (2.25, 0) (rightin) {};
    \node[regular] at (3, 0) (rightmiddle) {};
    \draw[gridline] (leftmiddle) -- (leftin);
    \draw[gridline] (rightin) -- (rightmiddle);
    \node[below] at (1.0, 0) () {\scriptsize $#1\vphantom{-\unitx}$};
    \node[below] at (3.0, 0) () {\scriptsize $#2\vphantom{-\unitx}$};
    \draw[gridline] (leftmiddletop)-- (leftmiddleu) -- (leftmiddle);
    \draw[gridline] (rightmiddletop)--(rightmiddleu) -- (rightmiddle);
    \draw[dotted] (leftmiddle) -- (rightmiddle);
    \draw[dotted] (leftmiddleu) -- (rightmiddleu);
    \draw[dotted] (leftmiddletop) -- (rightmiddletop);
    }}
\newcommand{\snakethreecmi}[2]{
  \shearedTikZ{
  \begin{scope}[xshift=-1cm, yshift=1.732cm]
    \node[triangle] at (1,0) (leftmiddletop) {};
    \node[] at (1.75, 0) (leftintop) {};
    \node[] at (2.25, 0) (rightintop) {};
    \node[triangle] at (3, 0) (rightmiddletop) {};
    \draw[gridline] (leftmiddletop) -- (leftintop);
    \draw[gridline] (rightintop) -- (rightmiddletop);
    \end{scope}
    \begin{scope}[xshift=-0.5cm, yshift=0.866cm]
    \node[square] at (1,0) (leftmiddleu) {};
    \node[] at (1.75, 0) (leftinu) {};
    \node[] at (2.25, 0) (rightinu) {};
    \node[square] at (3, 0) (rightmiddleu) {};
    \draw[gridline] (leftmiddleu) -- (leftinu);
    \draw[gridline] (rightinu) -- (rightmiddleu);
    \end{scope}
    \node[disk] at (1,0) (leftmiddle) {};
    \node[] at (1.75, 0) (leftin) {};
    \node[] at (2.25, 0) (rightin) {};
    \node[disk] at (3, 0) (rightmiddle) {};
    \draw[gridline] (leftmiddle) -- (leftin);
    \draw[gridline] (rightin) -- (rightmiddle);
    \node[below] at (1.0, 0) () {\scriptsize $#1\vphantom{-\unitx}$};
    \node[below] at (3.0, 0) () {\scriptsize $#2\vphantom{-\unitx}$};
    \draw[gridline] (leftmiddletop)-- (leftmiddleu) -- (leftmiddle);
    \draw[gridline] (rightmiddletop)--(rightmiddleu) -- (rightmiddle);
    \draw[dotted] (leftmiddle) -- (rightmiddle);
    \draw[dotted] (leftmiddleu) -- (rightmiddleu);
    \draw[dotted] (leftmiddletop) -- (rightmiddletop);
    }}
\tikzset{
  triangle/.style = {regular polygon, regular polygon sides=3,
              draw, fill=red!30!white, text width=1em,
              inner sep=0mm, outer sep=0mm, scale=0.3},
  square/.style = {regular polygon, regular polygon sides=4,
              draw, fill=green!30!white, text width=1em,
              inner sep=0mm, outer sep=0mm, scale=0.4},
  disk/.style = {circle, draw, fill=blue!30!white, text width=1em,
    inner sep=0mm, outer sep=0mm, scale=0.6},
  regular/.style = {circle, draw, fill=black, text width=1em,
    inner sep=0mm, outer sep=0mm, scale=0.4},
  gridline/.style = {line width=0.5pt}
  }
\newtheorem{theorem}{Theorem}
\newtheorem{lemma}{Lemma}
\newtheorem{corollary}{Corollary}
\newtheorem{definition}{Definition}
\begin{document}

\title{Entropy scaling law and the quantum marginal problem: simplification and generalization}

\author{Isaac H. Kim \thanks{Department of Computer Science, UC Davis, Davis, CA 95616, USA}}

\maketitle

\begin{abstract}
Recently, we introduced a solution to the quantum marginal problem relevant to two-dimensional quantum many-body systems~[\href{https://journals.aps.org/prx/abstract/10.1103/PhysRevX.11.021039}{I. H. Kim, Phys. Rev. X, \textbf{11}, 021039}]. 
One of the conditions was that the marginals are internally translationally invariant. We show that this condition can be replaced by a weaker condition, namely the local consistency of the marginals. This extends the applicability of the solution to any quantum many-body states in two dimensions that satisfy the entropy scaling law, with or without symmetry. We also significantly simplify the proof by advocating the usage of the maximum-entropy principle.
\end{abstract}

\section{Introduction}
\label{sec:introduction}
One of the major challenges in modern science lies in the inherent exponential complexity of describing the many-body wave functions that appear in nature. Naively, the number of parameters that need to be specified to even write down the wave function grows exponentially with the number of particles. Overcoming this challenge is one of the outstanding goals of physics, chemistry, and computer science. 

While such an exponential complexity is unavoidable for generic many-body wave functions, there may be better ways to characterize ``physical'' wave functions. For instance, physical quantities of interest, such as the energy and various order parameters, can be calculated directly from the \emph{marginal density matrices} (\emph{marginals} for short), each acting on a finite-dimensional Hilbert space. Characterizing the wave function in terms of the marginals is the primary goal of the \emph{quantum marginal problem}; see Ref.~\cite{Schilling2015} for a review. In the quantum marginal problem, one is given a set of marginals, each defined over a different subset of particles. The main question is whether there exists a global state whose reduced density matrices are equal to the given set of marginals. If the existence of such a state can be guaranteed, we say that the marginals are \emph{consistent}.

Despite many attempts, progress in this direction has been slow. This is likely because the problem of deciding whether a set of marginals is consistent or not is QMA-hard~\cite{Liu2006,Broadbent2019}. Without imposing an extra structure on the problem, it is unlikely that one can find an efficient solution.

However, recently a new nontrivial solution to the quantum marginal problem was found~\cite{Kim2021}. The central overarching theme behind this solution is the \emph{entropy scaling law}. This is a hypothesis that the von Neumann entropy --- defined as $S(\rho)= -\text{Tr}(\rho \log \rho)$ --- for a marginal density matrix $\rho$, is a sum of terms,  each proportional to the volume, area, and a universal constant, respectively. This scaling law, advocated in Ref.~\cite{Kitaev2006,Levin2006} and verified in many examples after that, can lead to surprising conclusions. If the scaling law is satisfied locally on the marginals and if the marginals obey an extra constraint related to the translation symmetry, then one can ensure that the marginals are consistent. Each such constraint is defined on a finite-dimensional Hilbert space, and the number of constraints scales linearly with the system size, making this an ``efficient'' solution.\footnote{We note that the conditions are \emph{equality constraints}, which may be difficult to enforce with a finite precision in realistic computers. Relaxing this to an inequality constraint may be possible using the approach developed in Ref.~\cite{Kim2016} but we do not discuss this issue in this paper.}

This solution has several appealing features. First and foremost, it was demonstrated in Ref.~\cite{Kim2021} that the solution applies to a large class of physical states --- even those with long-range entanglement, \emph{e.g.,} the toric code~\cite{Kitaev2003}. Secondly, because a set of marginals defines the global state, local expectation values can be computed readily and efficiently (from the marginals). That this is possible without sacrificing an ability to accommodate long-range entanglement is a remarkable fact. Lastly, one can compute the maximum global entropy exactly. This is a feature that allows nontrivial calculation for the free energy (at finite temperature), not just for ground state energy.
Thus, the solution in Ref.~\cite{Kim2021} may be a promising tool to study quantum many-body systems at both zero and finite temperatures.

However, that solution was only applicable to translationally invariant states, and removing this requirement was left as an open problem in Ref.~\cite{Kim2021}. We solve this problem in this paper, thereby generalizing the solution substantially. Practically, this means that a larger class of many-body quantum states have an efficient classical description, in the sense that their physical properties such as energy (with respect to a local Hamiltonian) and entropy can be computed efficiently. Our work also provides a new conceptual lesson by clarifying the nature of these solutions. The solution in Ref.~\cite{Kim2021} required a symmetry \emph{and} entropy scaling law. As such, it is natural to ask which of the two plays a more important role. Since we managed to remove all the symmetries, we can now definitively say that the entropy scaling law is the important one, not the symmetry.


The rest of this paper is structured as follows. In Section~\ref{sec:summary}, we provide an executive summary of our conventions and the main results.  Section~\ref{sec:conditional_independence} introduces the requisite background in conditional independence, the main workhorse of our theory. In Section~\ref{sec:recursion} and~\ref{sec:final}, we prove the main results. We end with a brief discussion in Section~\ref{sec:discussion}.

\section{Summary}
\label{sec:summary}
Our main result is a nontrivial solution to the quantum marginal problem for interacting quantum many-body systems in two dimensions. Suppose we are given a set of marginals over balls of bounded radii. We formulate a set of conditions that ensures the existence of a global state consistent with those marginals. The number of such conditions scales linearly with the number of degrees of freedom being considered, and each condition can be verified in a constant time. Thus, the conditions as a whole can be verified in a time that scales linearly with the number of elementary degrees of freedom. In the rest of this section, we provide a more detailed explanation of this result. Our notations and conventions are introduced in Section~\ref{sec:notation}. Our main results are summarized in Section~\ref{sec:results}.

\subsection{Notations and conventions}
\label{sec:notation}

Consider a Hilbert space $\mathcal{H}$ with a tensor product structure over a set $\Lambda$, \emph{i.e.,} $\mathcal{H} = \bigotimes_{v\in \Lambda} \mathcal{H}_v$. For $A\subset \Lambda$, let $\mathcal{D}_A$ be the set of density matrices acting on $\mathcal{H}_A := \otimes_{v\in A} \mathcal{H}_v$. We say $\rho_A \in \mathcal{D}_A$ and $\sigma_B \in \mathcal{D}_B$ are \emph{locally consistent} if $\rho_{A\cap B} = \sigma_{A\cap B}$ and denote this fact as
\begin{equation}
    \rho_A \consistent \sigma_B.
\end{equation}

For a density matrix $\rho$, its von Neumann entropy is defined as $S(\rho) := -\text{Tr}(\rho \log \rho)$. For a bipartite density matrix $\rho_{AB} \in \mathcal{D}_{AB}$, its conditional entropy is defined as $S(A|B)_{\rho} = S(\rho_{AB}) - S(\rho_B)$. 
For a tripartite density matrix $\rho_{ABC} \in \mathcal{D}_{ABC}$, its conditional mutual information is
$I(A:C|B)_{\rho} := S(A|B)_{\rho} - S(A|BC)_{\rho}$. By the strong subadditivity of entropy (SSA), $I(A:C|B)_{\rho}\geq 0$ for any density matrix $\rho$~\cite{Lieb1973}. If SSA is satisfied with an equality, the underlying state is said to be a \emph{quantum Markov chain}, specifically, between $A$ and $C$ conditioned on $B$.

Now let us discuss our choice of $\Lambda$. For concreteness, we shall set $\Lambda = \{v_x \unitx + v_y \unity: v_x, v_y \in \mathbb{Z} \},$ where $\unitx = \hat{x}$ and $\unity = -\frac{1}{2}\hat{x} + \frac{\sqrt{3}}{2}\hat{y}$, with $\hat{x}$ and $\hat{y}$ being the unit vectors in $\mathbb{R}^2$. We shall denote the element $v_x \unitx + v_y \unity \in \Lambda$ as  $v=(v_x, v_y)$. We can interpret this lattice as a planar graph, viewing $\Lambda$ as a set of vertices and also by assigning edges between $v$ and $u$ such that $|v_x-u_x| + |v_y-u_y| =1$. 

For $A \subset \Lambda$, its \emph{anchoring point} is, colloquially speaking, the right-most vertex on the bottom row of $A$. (More formally, this is the $v\in A$ with the largest $v_x$ within the set of vertices in $A$ with the smallest $v_y$.) If $v$ is an anchoring point of $A$, we say $A$ is \emph{anchored at $v$}. A $n\times m$ \emph{cluster} is a subset of $\Lambda$ consisting of points $\{(v_x + k_x, v_y+k_y): k_x \in \{ 1, \ldots, n\}, k_y \in \{1, \ldots m \} \}$ for some $v \in \Lambda$. Naturally, the anchoring point of this particular cluster would be $(v_x+n, v_y)$.

We shall assume that we are given a set of density matrices on every $3\times 3$ cluster and refer to these marginals as  \emph{fundamental marginals}. (We shall denote the set of fundamental marginals as $\mathcal{M}$.) We shall represent each  fundamental marginal by the following diagram:
\begin{equation}
    \nbym{3}{3}{v}, \label{eq:fundamental_marginal_ex1}
\end{equation}
where $v\in \Lambda$ is the anchoring point of the underlying cluster. (If the diagram appears within the text, we shall simply state the anchoring point without specifying it explicitly in the diagram, \emph{e.g.,} $\shearedTikZmini{\mycustomgrid{2}{2}{} \node[regular] at (1,1) () {}; \node[regular] at (2,1) () {}; \node[regular] at (1,2) () {}; \node[regular] at (2,2) () {};}$ anchored at $v$.) Reduced density matrices of the fundamental marginals shall be also referred to as fundamental marginals, and specified by their respective diagrams, \emph{e.g.,}
\begin{equation}
    \shearedTikZ{\mycustomgrid{3}{3}{v} \node[regular] at (1,1) () {};\node[regular] at (2,1) () {};\node[regular] at (3,1) () {};\node[regular] at (1,2) () {};}, 
    \nbym{3}{2}{v}, \nbym{2}{3}{v}, \ldots
\end{equation}
Let us remark that the presence or absence of the edges are immaterial to our analysis; they are there only to clarify the relative locations of the vertices. 

\subsection{Main result}
\label{sec:results}

We formulate two conditions on the fundamental marginals. First, they are locally consistent:
\begin{equation}
    \nbym{3}{3}{v} \consistent \nbym{3}{3}{v'} 
\end{equation}
$\forall v, v' \in \Lambda.$ The local consistency of these fundamental marginals ensures the local consistency of other fundamental marginals. This is because $\rho_A \consistent \sigma_B$ implies $\rho_{A'} \consistent \sigma_{B'}$ for any $A' \subset A$ and $B'\subset B$. Therefore, for any subset of the $3\times 3$ cluster, its fundamental marginal is uniquely defined.


Next, we assume that the fundamental marginals obey a certain set of quantum Markov chain conditions, denoted as $\mathcal{C}_M$ below. Let us first state the conditions and explain what they mean.
\begin{equation}
\mathcal{C}_{M}^{v} =\left\{
\shearedTikZ{
    \mygrid{v}
    \node[disk] at (0,1) () {};
    \node[square] at (0,0) () {};
    \node[triangle] at (1,0) () {};
    },
\shearedTikZ{
    \mygrid{v}    
    \node[disk] at (0,2) () {};
    \node[disk] at (1,2) () {};
    \node[disk] at (0,1) () {};
    \node[disk] at (0,0) () {};
    \node[square] at (1,0) () {};
    \node[square] at (1,1) () {};
    \node[triangle] at (2,0) () {};
    \node[triangle] at (2,1) () {};
    },
    \shearedTikZ{
    \mygrid{v}    
    \node[disk] at (0,2) () {};
    \node[disk] at (1,2) () {};
    \node[square] at (0,1) () {};
    \node[square] at (1,1) () {};
    \node[triangle] at (2,0) () {};
    \node[triangle] at (2,1) () {};
    \node[triangle] at (0,0) () {};
    \node[triangle] at (1,0) () {};
    },
    \shearedTikZ{
    \mygrid{v}    
    \node[disk] at (2,2) () {};
    \node[square] at (1,2) () {};
    \node[square] at (1,1) () {};
    \node[square] at (2,1) () {};
    \node[triangle] at (0,0) () {};
    \node[triangle] at (1,0) () {};
    \node[triangle] at (2,0) () {};
    \node[triangle] at (0,1) () {};
    \node[triangle] at (0,2) () {};
    }, \,\, + \text{$\pi$-rotations} \right\} \label{eq:markov_conditions}
\end{equation}
\begin{equation}
    \mathcal{C}_M = \bigcup_{v\in \Lambda} \mathcal{C}_{M}^{v}
\end{equation}
Each diagram represents a condition $I(A:C|B)_{\rho}:=S(\rho_{AB}) + S(\rho_{BC}) - S(\rho_B) - S(\rho_{ABC})=0$, where $S(\rho) := -\text{Tr}(\rho \log \rho)$ is the von Neumann entropy and $\rho_{ABC}$ is a fundamental marginal over the set of vertices shown in the diagram, with the following choice of subsystems: red triangle for $A$, green square for $B$, and blue circle for $C$. (Since $I(A:C|B)_{\rho}$ is invariant under the exchange of $A$ and $C$, the blue disks and the red triangles can be exchanged without changing the meaning of the diagram.) The phrase ``$\pi$-rotations'' means the rotated versions of the four diagrams around the center of each underlying cluster by an angle of $\pi$. Let us remark that these conditions are equivalent to the ones advocated in Ref.~\cite{Kim2021}; see Appendix A of the reference for the proof. In the notation of Ref.~\cite{Kim2021} each dot corresponds to a coarse-grained degree of freedom, represented by a square, \emph{e.g.,}
\begin{equation}
      \nbym{3}{3}{} \,\, \text{(this paper)}\,\, =\,\, \centeredTikZ{\squaresempty{3}{3}{0}{0};} \,\,  \left( \text{Ref.~\cite{Kim2021}} \right).
\end{equation}

These conditions have nontrivial implications, which are the main results of this paper. First,  there exists a $\tau \in \mathcal{D}_{\Lambda}$ such that
\begin{equation}
\boxed{
    \tau \consistent \nbym{3}{3}{v}
}
\end{equation}
$\forall v\in \Lambda$. Second, the maximum entropy consistent with the fundamental marginals is
\begin{equation}
\boxed{
    \max_{\substack{\rho \in \mathcal{D}_{\Lambda} \\
    \rho \consistent \mathcal{M}}} S(\rho) = \sum_{v\in \Lambda} \left(\nbym{2}{2}{v}  - \nbym{2}{1}{v} - \nbym{1}{2}{v} + \nbym{1}{1}{v}\right),
}\label{eq:main_entropy_decomposition}
\end{equation}
where $\rho \consistent \mathcal{M}$ means that $\rho$ is consistent with all the fundamental marginals. Thus, from the fundamental marginals, one can obtain a variational upper bound to the ground state energy and more generally, finite-temperature free energy. 

Let us remark that, while the entropy decomposition is formally an infinite sum, for studying finite systems, this sum can be truncated to a finite one by imposing an appropriate boundary condition. For instance, if we want to study a Hamiltonian defined on a finite cluster $C$, we can set the fundamental marginals outside of this cluster as a pure product state, \emph{e.g.,} $|0\rangle\ldots |0\rangle$. This way, the constraints in $\mathcal{C}_M$ defined outside of $C$ are trivially satisfied, and their entropy contribution is precisely zero, making the sum finite. Alternatively, one may choose to impose translational invariance, but with a strictly larger periodicity than the lattice spacing. In that case, the global entropy density will be the summand in Eq.~\eqref{eq:main_entropy_decomposition}, appropriately averaged over the unit cell. 

Thus, our result is a generalization of the existing solutions to the quantum marginal problem (for quantum many-body systems in two spatial dimensions) in two different ways. The solution in Ref.~\cite{Kim2016} is only useful for energy calculation (because no expression for the global entropy was derived), and the one in Ref.~\cite{Kim2021} is only applicable to translationally invariant states. In contrast, our result applies to both energy and free energy calculation and does not require any assumption on translational invariance.

The significance of our result is that $\mathcal{C}_M$ holds universally in a large class of quantum many-body systems that appear in nature~\cite{Kim2021}, at least, empirically and approximately. These are gapped quantum many-body systems in two dimensions, which are challenging to study numerically. Our solution to the quantum marginal problem is likely to apply to these systems (because such systems are expected to obey the entropy scaling law, from which Eq.~\eqref{eq:markov_conditions} follows~\cite{Kim2021}) and allows an efficient calculation of energy and free energy (because both the ground state energy and the global entropy can be calculated from the fundamental marginals directly). This is  evidence that a large class of two-dimensional quantum many-body systems that have remained intractable may be efficiently simulable on a classical computer after all. That would be pleasing, both conceptually and practically.

\section{Conditional independence}
\label{sec:conditional_independence}
Our analysis relies heavily on the concept of \emph{conditional independence}. A tripartite quantum state $\rho_{ABC}$ is conditionally independent (between $A$ and $C$ conditioned on $B$) if $I(A:C|B)_{\rho}=0$. As discussed before, we can alternatively say that $\rho_{ABC}$ is a quantum Markov chain. 

In this Section, we review various facts about conditional independence. While none of them are new, using them is often nontrivial, mostly because of the lack of a convenient formalism. The concept of conditional independence is a genuinely tripartite one, so in a multipartite setup, we must specify the subsystems appropriately. The set of subsets is a large one, and one often has no choice but to specify the subset by its elements. This process requires a large amount of work on properly bookkeeping the subsets. Any notational convention based on an explicit specification of every element in the set quickly becomes unwieldy. 

However, let us recall that the set under our consideration $\Lambda$ is not an arbitrary one. After all, this is a two-dimensional lattice, and a simple diagrammatic notation can convey the same information in a more transparent way. As we briefly discussed in Section~\ref{sec:results}, we shall represent a fundamental marginal in terms of diagrams like this:
\begin{equation}
    \shearedTikZ{\mycustomgrid{3}{3}{v} \node[regular] at (1,1) () {};\node[regular] at (2,1) () {};\node[regular] at (3,1) () {};\node[regular] at (1,2) () {};}, 
    \nbym{3}{2}{v}, \nbym{2}{3}{v}, \ldots
\end{equation}
where the density matrix that each diagram represents is the fundamental marginal over the black dots. The locations of these dots are specified implicitly by the anchoring point $v$. (The edges are merely bookkeeping devices to clarify the relative locations of the vertices.) We can specify the conditional independence relation of the fundamental marginals as, for instance,
\begin{equation}
    \shearedTikZ{
    \mygrid{v}    
    \node[disk] at (2,2) () {};
    \node[square] at (1,2) () {};
    \node[square] at (1,1) () {};
    \node[square] at (2,1) () {};
    \node[triangle] at (0,0) () {};
    \node[triangle] at (1,0) () {};
    \node[triangle] at (2,0) () {};
    \node[triangle] at (0,1) () {};
    \node[triangle] at (0,2) () {};
    },
\end{equation}
which means that $I(A:C|B)_{\rho}=0$ for a fundamental marginal $\rho_{ABC}$ over the $3\times 3$ cluster, where $A$ is the set of red triangles, $B$ is the set of green squares, and $C$ is the set of blue circles. Note that $I(A:C|B)_{\rho} = I(C:A|B)_{\rho}$, so $A$ and $C$ can be exchanged without changing the condition. To write these conditions more explicitly, we would have needed to specify the subsets as $A=\{v, v-\unitx, v-2\unitx, v-2\unitx + \unity, v-2\unitx + 2\unity \},$ $B=\{v+\unity, v+\unity-\unitx, v+2\unity - \unitx\}$, and $C=\{v+2\unity \}$, which is not a particularly appealing convention in the author's opinion.


A fascinating fact is that these diagrams form a basis of an intricate web of logical statements, which we shall explain in detail below. The remainder of this section is organized in the following way. In Section~\ref{sec:mono}, we discuss how SSA can be used to infer new conditional independence relations from the given ones. In Section~\ref{sec:markov_entorpy}, we introduce a notion of Markov entropy decomposition~\cite{Poulin2011} and elucidate its connection to the maximum-entropy principle.~\cite{Kim2014}.  In Section~\ref{sec:quantum_markov_chain} and~\ref{sec:merging_algebra}, we review facts about quantum Markov chains and merging algebra~\cite{Kim2021}.

\subsection{Monotonicity and its reverse}
\label{sec:mono}
From SSA, one can show that
\begin{equation}
\begin{aligned}
    I(A:C|B) &\leq I(A:CD|B), \\
    I(A:C|BD) & \leq I(A:CD|B),
\end{aligned}
\end{equation}
for any density matrix over $\rho_{ABCD}$. (Similarly, one can show that $I(A:C|B) \leq I(AD:C|B)$ and $I(A:C|BD) \leq I(AD:C|B)$.) Because conditional mutual information must be nonnegative by SSA, we can also conclude that
\begin{equation}
\begin{aligned}
     I(A:CD|B)=0 &\to I(A:C|B)=0, \\
    I(A:CD|B)=0 &\to I(A:C|BD)=0. 
\end{aligned}\label{eq:mono_equation}
\end{equation}
We shall refer to the logical deduction of the form in Eq.~\eqref{eq:mono_equation} as the \emph{monotonicity move} and refer to it using $\mono$, \emph{e.g.,}
\begin{equation}
\begin{aligned}
\shearedTikZ{
    \mygrid{v}    
    \node[disk] at (0,2) () {};
    \node[disk] at (1,2) () {};
    \node[disk] at (0,1) () {};
    \node[disk] at (0,0) () {};
    \node[square] at (1,0) () {};
    \node[square] at (1,1) () {};
    \node[triangle] at (2,0) () {};
    \node[triangle] at (2,1) () {};
    }
    \mono
\shearedTikZ{
    \mygrid{v}    
    \node[disk] at (0,2) () {};
    \node[disk] at (1,2) () {};
    \node[disk] at (0,0) () {};
    \node[square] at (1,0) () {};
    \node[square] at (1,1) () {};
    \node[triangle] at (2,0) () {};
    \node[triangle] at (2,1) () {};
    },
\shearedTikZ{
    \mygrid{v}    
    \node[disk] at (0,2) () {};
    \node[disk] at (0,0) () {};
    \node[square] at (1,0) () {};
    \node[square] at (1,1) () {};
    \node[triangle] at (2,0) () {};
    \node[triangle] at (2,1) () {};
    },
\shearedTikZ{
    \mygrid{v}    
    \node[disk] at (0,1) () {};
    \node[disk] at (0,0) () {};
    \node[square] at (1,0) () {};
    \node[square] at (1,1) () {};
    \node[triangle] at (2,0) () {};
    \node[triangle] at (2,1) () {};
    }
    , \ldots\\
    \shearedTikZ{
    \mygrid{v}    
    \node[disk] at (0,2) () {};
    \node[disk] at (1,2) () {};
    \node[disk] at (0,1) () {};
    \node[disk] at (0,0) () {};
    \node[square] at (1,0) () {};
    \node[square] at (1,1) () {};
    \node[triangle] at (2,0) () {};
    \node[triangle] at (2,1) () {};
    } 
    \mono
    \shearedTikZ{
    \mygrid{v}    
    \node[disk] at (0,2) () {};
    \node[square] at (1,2) () {};
    \node[square] at (0,1) () {};
    \node[disk] at (0,0) () {};
    \node[square] at (1,0) () {};
    \node[square] at (1,1) () {};
    \node[triangle] at (2,0) () {};
    \node[triangle] at (2,1) () {};
    } ,
    \shearedTikZ{
    \mygrid{v}    
    \node[disk] at (0,2) () {};
    \node[disk] at (1,2) () {};
    \node[disk] at (0,1) () {};
    \node[disk] at (0,0) () {};
    \node[square] at (1,0) () {};
    \node[square] at (1,1) () {};
    \node[square] at (2,0) () {};
    \node[triangle] at (2,1) () {};
    } 
    ,
    \shearedTikZ{
    \mygrid{v}    
    \node[disk] at (0,2) () {};
    \node[square] at (1,2) () {};
    \node[disk] at (0,1) () {};
    \node[disk] at (0,0) () {};
    \node[square] at (1,0) () {};
    \node[square] at (1,1) () {};
    \node[square] at (2,0) () {};
    \node[triangle] at (2,1) () {};
    } , \ldots
\end{aligned}
\end{equation}
Diagrammatically, we can understand the monotonicity move simply as a process in which (i) parts of the blue circles or red triangles are removed or (ii) they are absorbed into green squares.

Note that the monotonicity argument does not always work in the opposite direction. That is, $I(A:C|B)=0$  generally does not imply $I(A:CD|B)=0$. However, note that $I(A:CD|B) - I(A:C|B) = I(A:D|BC)$. Therefore, $I(A:C|B)=0$ and $I(A:D|BC)=0$ does imply $I(AD:C|B)=0$. (Note that both $I(A:C|B)=0$ and $I(A:D|BC)=0$ are consequences of $I(A:CD|B)=0$.) While neither of the conditions by itself implies $I(A:CD|B)=0$, as a whole they do. We shall refer to this kind of argument as the \emph{reverse monotonicity move} and refer to it as $\revmono$, \emph{e.g.,}
\begin{equation}
    \begin{aligned}
\shearedTikZ{
    \mygrid{v}    
    \node[disk] at (0,2) () {};
    \node[disk] at (1,2) () {};
    \node[disk] at (0,0) () {};
    \node[square] at (1,0) () {};
    \node[square] at (1,1) () {};
    \node[triangle] at (2,0) () {};
    \node[triangle] at (2,1) () {};
    }
    ,
    \shearedTikZ{
    \mygrid{v}    
    \node[square] at (0,2) () {};
    \node[square] at (1,2) () {};
    \node[square] at (0,0) () {};
    \node[square] at (1,0) () {};
    \node[square] at (1,1) () {};
    \node[disk] at (0,1) () {};
    \node[triangle] at (2,0) () {};
    \node[triangle] at (2,1) () {};
    }
    \revmono
    \shearedTikZ{
    \mygrid{v}    
    \node[disk] at (0,2) () {};
    \node[disk] at (1,2) () {};
    \node[disk] at (0,1) () {};
    \node[disk] at (0,0) () {};
    \node[square] at (1,0) () {};
    \node[square] at (1,1) () {};
    \node[triangle] at (2,0) () {};
    \node[triangle] at (2,1) () {};
    },
    \\
    \shearedTikZ{
    \mygrid{v}    
    \node[disk] at (0,2) () {};
    \node[square] at (1,2) () {};
    \node[square] at (0,1) () {};
    \node[disk] at (0,0) () {};
    \node[square] at (1,0) () {};
    \node[square] at (1,1) () {};
    \node[triangle] at (2,0) () {};
    \node[triangle] at (2,1) () {};
    },
    \shearedTikZ{
    \mygrid{v}    
    \node[disk] at (1,2) () {};
    \node[disk] at (0,1) () {};
    \node[square] at (1,0) () {};
    \node[square] at (1,1) () {};
    \node[triangle] at (2,0) () {};
    \node[triangle] at (2,1) () {};
    }
    \revmono
    \shearedTikZ{
    \mygrid{v}    
    \node[disk] at (0,2) () {};
    \node[disk] at (1,2) () {};
    \node[disk] at (0,1) () {};
    \node[disk] at (0,0) () {};
    \node[square] at (1,0) () {};
    \node[square] at (1,1) () {};
    \node[triangle] at (2,0) () {};
    \node[triangle] at (2,1) () {};
    }.
    \end{aligned}
\end{equation}
Diagrammatically, the reverse monotonicity move can be understood in the following way. Two conditions need to be fulfilled. First, the red triangles must be placed on the same set of vertices. Second, the union of the blue circles and green squares of one diagram must be precisely the set of green squares in the other diagram. Once these conditions are met, we get a new diagram in which the green squares are the intersection of the two sets of green squares. The blue circles are the union of the two sets of blue circles.

Note that the reverse monotonicity move can be thought of as an ``opposite'' of the monotonicity move. By reversing the direction of the arrow, we obtain statements that follow from monotonicity.

\subsection{Markov entropy decomposition}
\label{sec:markov_entorpy}
Entropy is a \emph{non-linear} functional of the state. Therefore, it is generally impossible to decompose the entropy into a linear combination of expectation values obtainable from bounded subsystems. However, what one can do in general is to \emph{bound} the entropy in terms of those local quantities. Markov entropy decomposition provides an upper bound of the entropy of a state in terms of its marginal entropies~\cite{Poulin2011}.

Specifically, consider a finite subset of $\Lambda$, denoted as $V$. Note that $V$ induces a set of edges, namely the edges that end on at least one element of $V$. We can start with a set including a single vertex and ``grow'' the set by adding one vertex at a time, among the ones in the neighborhood of that very set. Any such finite sequence defines a \emph{path} $\mathcal{P}_n = (v_1,\ldots, v_n)$, which can be also thought as a sequence of graphs $G_i=(V_i, E_i)$. 

Given a path $\mathcal{P}_n := (v_1, \ldots, v_n),$ the Markov entropy decomposition is defined recursively~\cite{Poulin2011}. Let $\rho \in \mathcal{D}_{V_N}$.
\begin{equation}
    \markoventropy{\rho}{\mathcal{P}_k} = \markoventropy{\rho}{\mathcal{P}_{k-1}} + S(v_k|N(v_k) \cap V_{k-1})_{\rho},
\end{equation}
where $v_k \in N(V_k)$; here $N(v_k)$ is a set of neighbors of $v_k$ and $N(V_k) := \cup_{v\in V_k} N(v_k)$; the Markov entropy decomposition of an empty set is assumed to be $0$. In words, this decomposition iteratively adds the conditional entropy of $v_k$ conditioned on the intersection of its neighbor and the set of vertices that appeared up to that point.

A useful fact about the Markov entropy decomposition is that it is local. Note that the conditional entropy can be computed directly from a neighborhood of a vertex, which contains a bounded number of sites. Therefore, even if the global state is unknown, the Markov entropy decomposition can be computed from the marginals. 

An important fact is that the Markov entropy decomposition upper bounds the von Neumann entropy. This follows from a simple induction argument. The $k=1$ and $2$ case is trivial. For $k\geq 3$ the proof follows from SSA: 
\begin{equation}
\begin{aligned}
    \markoventropy{\rho}{\mathcal{P}_k} - S(\rho_{V_k})&\geq S(\rho_{V_{k-1}}) + S(v_k|N(v_k) \cap V_{k-1})_{\rho} - S(\rho_{V_k}) \\
    &\geq 0.
\end{aligned}
\end{equation}


Another significance of the markov entropy decomposition lies in its connection to the maximum-entropy principle. If the Markov entropy decomposition is equal to the von Neumann entropy of the underlying state, then that state is unique.
Since we use this fact a lot, we state this more formally below, with an outline of the proof.
\begin{lemma}
~\cite{Kim2014}
Let $\mathcal{P}_n= (v_1, \ldots, v_n)$  be a path. Let $\rho, \sigma \in \mathcal{D}_{\{v_1,\ldots, v_n \}}$. If $\rho_{N(v_k)} = \sigma_{N(v_k)}$ for all $v_k \in \mathcal{P}_n$  and $S(\rho) = S(\sigma) = \markoventropy{\rho}{\mathcal{P}_n} = \markoventropy{\sigma}{\mathcal{P}_n}$, $\rho=\sigma$.
\label{lemma:max_entropy_uniqueness}
\end{lemma}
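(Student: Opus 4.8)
The plan is to show $\rho=\sigma$ by proving that the quantum relative entropy $D(\rho\|\sigma):=\mathrm{Tr}(\rho\log\rho)-\mathrm{Tr}(\rho\log\sigma)$ vanishes, since $D(\rho\|\sigma)=0$ forces $\rho=\sigma$ by Klein's inequality. This is exactly the maximum-entropy viewpoint advertised in the introduction: the saturation condition $S(\rho)=\markoventropy{\rho}{\mathcal{P}_n}$ will pin down $\rho$ as the unique entropy maximizer compatible with the neighborhood marginals. Two ingredients feed into this argument. First, saturation turns every step of the recursion into a quantum Markov chain. Second, the logarithm of a quantum Markov chain splits into pieces each supported on a single neighborhood, which is what lets me evaluate $\mathrm{Tr}(\rho\log\sigma)$ using only the shared marginals.

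First I would extract the Markov structure. Writing $B_k:=N(v_k)\cap V_{k-1}$ and $C_k:=V_{k-1}\setminus N(v_k)$, the one-step increment of the gap $g_k:=\markoventropy{\rho}{\mathcal{P}_k}-S(\rho_{V_k})$ is $S(v_k\mid B_k)_\rho-S(v_k\mid V_{k-1})_\rho=I(v_k:C_k\mid B_k)_\rho\ge 0$ by SSA. Since $g_1=0$ and the sequence $g_k$ is nondecreasing (this is precisely the induction already carried out in the excerpt), the hypothesis $g_n=\markoventropy{\rho}{\mathcal{P}_n}-S(\rho)=0$ collapses the whole chain, giving $g_k\equiv 0$ and hence $I(v_k:C_k\mid B_k)_\rho=0$ at every step. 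The identical deduction applies to $\sigma$.

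Next I would use the quantum Markov chain structure to localize $\log\sigma$. For each $k$ the relation $I(v_k:C_k\mid B_k)_\sigma=0$ gives the operator identity $\log\sigma_{V_k}=\log\sigma_{\{v_k\}\cup B_k}+\log\sigma_{V_{k-1}}-\log\sigma_{B_k}$; telescoping this down the path yields $\log\sigma=\sum_k\bigl(\log\sigma_{\{v_k\}\cup B_k}-\log\sigma_{B_k}\bigr)$, a sum of operators each supported inside a single neighborhood $N(v_k)$. Feeding this into $\mathrm{Tr}(\rho\log\sigma)$, every term has the form $\mathrm{Tr}(\rho_R\log\sigma_R)$ with $R\subseteq N(v_k)$; because $\rho_{N(v_k)}=\sigma_{N(v_k)}$ I may replace $\rho_R$ by $\sigma_R$, so $\mathrm{Tr}(\rho\log\sigma)=\mathrm{Tr}(\sigma\log\sigma)=-S(\sigma)$. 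Combined with $S(\rho)=S(\sigma)$ this gives $D(\rho\|\sigma)=-S(\rho)+S(\sigma)=0$, whence $\rho=\sigma$.

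The hard part will be justifying the operator identity for $\log\sigma_{V_k}$ rigorously, including the rank/support subtleties that arise when the marginals are not full rank; this is the Hayden--Jozsa--Petz--Winter structure of quantum Markov chains, which the paper reviews in the later subsection on quantum Markov chains, so I would invoke it from there. If one prefers to sidestep the log-decomposition, an equivalent route is a direct induction on $k$: a quantum Markov chain is uniquely reconstructible from its two marginals $\sigma_{\{v_k\}\cup B_k}$ and $\sigma_{V_{k-1}}$ via the Petz recovery map, and since these two marginals coincide for $\rho$ and $\sigma$ (the first lies inside $N(v_k)$, the second holds by the inductive hypothesis), one concludes $\rho_{V_k}=\sigma_{V_k}$ and finishes at $k=n$.
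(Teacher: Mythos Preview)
Your argument is correct, but the paper takes a different and considerably shorter route. Rather than extracting the step-by-step Markov structure and decomposing $\log\sigma$, the paper simply considers the mixture $\tau=\tfrac12(\rho+\sigma)$: since $\tau$ has the same neighborhood marginals as $\rho$ and $\sigma$, it has the same Markov entropy decomposition, so $S(\tau)\le \markoventropy{\tau}{\mathcal{P}_n}=\tfrac12\bigl(S(\rho)+S(\sigma)\bigr)$; combining this with the quantitative strict-concavity bound $S\bigl(\tfrac{\rho+\sigma}{2}\bigr)-\tfrac12\bigl(S(\rho)+S(\sigma)\bigr)\ge\tfrac18\|\rho-\sigma\|_1^2$ forces $\rho=\sigma$. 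The paper's proof is thus more elementary: it never needs the Hayden--Jozsa--Petz--Winter structure, the operator identity $\log\sigma_{ABC}=\log\sigma_{AB}+\log\sigma_{BC}-\log\sigma_B$, or even the observation that saturation implies $I(v_k:C_k\mid B_k)=0$ at each step, and it sidesteps entirely the support issues you flag. Your approach, on the other hand, makes explicit \emph{why} uniqueness holds---the state is operationally reconstructible from its local marginals---and your alternative Petz-induction route is essentially self-contained once Petz's theorem is in hand, which the paper reviews in the next subsection anyway.
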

\begin{proof}
Let $\tau = \frac{\rho + \sigma}{2}$. Since $\markoventropy{\tau}{\mathcal{P}_n}$ only depends on the marginal of $\tau$ on $N(v_k)$, $v_k \in \mathcal{P}_n$, $\markoventropy{\tau}{\mathcal{P}_n}= \markoventropy{\rho}{\mathcal{P}_n} = \markoventropy{\sigma}{\mathcal{P}_n}$. Therefore,
\begin{equation}
\begin{aligned}
    S(\tau) - \frac{1}{2} (S(\rho) + S(\sigma)) &\leq \markoventropy{\tau}{\mathcal{P}_n} - \frac{1}{2} (S(\rho) + S(\sigma))\\
    &=0.
\end{aligned}
\end{equation}
Since 
\begin{equation}
    S\left(\frac{\rho + \sigma}{2}\right) - \frac{1}{2} \left(S(\rho) + S(\sigma)\right) \geq \frac{1}{8} \|\rho - \sigma \|_1^2
\end{equation}
for density matrices $\rho$ and $\sigma$, where $\| \cdots \|_1$ is the $1$-norm~\cite{Kim2014}, $\rho=\sigma$.
\end{proof}


\subsection{Quantum Markov Chain}
\label{sec:quantum_markov_chain}

From the conditional independence relation, one can derive nontrivial identities relating different fundamental marginals. Petz showed that $I(A:C|B)_{\rho}=0$ if and only if $\rho_{ABC} = \rho_{BC}^{\frac{1}{2}} \rho_B^{-\frac{1}{2}} \rho_{AB} \rho_B^{-\frac{1}{2}} \rho_{BC}^{\frac{1}{2}}$~\cite{Petz1988}. Here $(\cdot) \to \rho_{BC}^{\frac{1}{2}} \rho_B^{-\frac{1}{2}} (\cdot) \rho_B^{-\frac{1}{2}} \rho_{BC}^{\frac{1}{2}}$ is a quantum channel, known as the \emph{Petz map}. We shall represent this fact as simply 
\begin{equation}
    I(A:C|B)_{\rho} = 0 \Longleftrightarrow \rho_{ABC} = \rho_{AB} \rightmerge \rho_{BC}, \label{eq:petz_theorem}
\end{equation}
where $\sigma_{AB} \rightmerge \rho_{BC} := \rho_{BC}^{\frac{1}{2}} \rho_B^{-\frac{1}{2}} \sigma_{AB} \rho_B^{-\frac{1}{2}} \rho_{BC}^{\frac{1}{2}}$ is the \emph{right-merge} of $\rho_{BC}$ into $\sigma_{AB}$~\cite{Kim2021}. Note that the right-merge operation is well-defined for any pair of density matrices, even if they are not locally consistent. Moreover, Eq.~\eqref{eq:petz_theorem} holds even if we exchange $A$ and $C$ because $I(A:C|B)_{\rho}$ is invariant under such an exchange. We shall represent this logical statement using $\petz$, \emph{e.g.,}
\begin{equation}
    \shearedTikZ{
    \mygrid{v}    
    \node[disk] at (0,2) () {};
    \node[disk] at (1,2) () {};
    \node[disk] at (0,1) () {};
    \node[disk] at (0,0) () {};
    \node[square] at (1,0) () {};
    \node[square] at (1,1) () {};
    \node[triangle] at (2,0) () {};
    \node[triangle] at (2,1) () {};
    } 
    \petz
    \shearedTikZ{
    \mygrid{v}    
    \node[regular] at (0,2) () {};
    \node[regular] at (1,2) () {};
    \node[regular] at (0,1) () {};
    \node[regular] at (0,0) () {};
    \node[regular] at (1,0) () {};
    \node[regular] at (1,1) () {};
    \node[regular] at (2,0) () {};
    \node[regular] at (2,1) () {};
    } 
    =
    \shearedTikZ{
    \mygrid{v}
    \node[regular] at (0,0) () {};
    \node[regular] at (1,0) () {};
    \node[regular] at (0,1) () {};
    \node[regular] at (1,1) () {};
    \node[regular] at (0,2) () {};
    \node[regular] at (1,2) () {};
    }
    \rightmerge
    \shearedTikZ{
    \mygrid{v}
    \node[regular] at (1,0) () {};
    \node[regular] at (2,0) () {};
    \node[regular] at (1,1) () {};
    \node[regular] at (2,1) () {};
    },
\end{equation}
and the same type of equation with an reversed arrow. 

Let us emphasize that the precise form of the Petz map is unimportant. For instance, the rotated Petz map~\cite{Fawzi2015} will serve an equally useful purpose.\footnote{In fact, to extend our argument to the approximate case, \emph{i.e.,} the case in which $I(A:C|B)_{\rho} \approx 0$, it is desirable to use the rotated Petz map instead of the Petz map~\cite{Kim2016}. Working out this modification is left for future work.} Let us state a useful lemma that emphasizes this point.
\begin{lemma}
$I(A:C|B)_{\rho}=0$ if and only if
\begin{equation}
\begin{aligned}
    \text{Tr}_C(\Phi_{B\to BC}(\rho_{AB})) &= \rho_{AB}, \\
    \Phi_{B\to BC}(\rho_{B}) &= \rho_{BC},
\end{aligned}
\end{equation}
for some quantum channel $\Phi_{B\to BC}$.
\label{lemma:local_lemma}
\end{lemma}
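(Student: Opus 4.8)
The plan is to prove the two implications separately, handling the ``only if'' direction with the explicit Petz map and the ``if'' direction with a Stinespring dilation followed by strong subadditivity. Throughout I read $\Phi_{B\to BC}(\rho_{AB})$ as $(\mathrm{id}_A \otimes \Phi_{B\to BC})(\rho_{AB})$, a state on $ABC$.

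For the forward direction, suppose $I(A:C|B)_\rho = 0$. By the Petz theorem stated in Eq.~\eqref{eq:petz_theorem}, $\rho_{ABC} = \rho_{BC}^{\frac{1}{2}}\rho_B^{-\frac{1}{2}}\rho_{AB}\rho_B^{-\frac{1}{2}}\rho_{BC}^{\frac{1}{2}}$. I would therefore simply exhibit the Petz map $\Phi_{B\to BC}(X_B) = \rho_{BC}^{\frac{1}{2}}(\rho_B^{-\frac{1}{2}}X_B\rho_B^{-\frac{1}{2}}\otimes I_C)\rho_{BC}^{\frac{1}{2}}$, which is a legitimate quantum channel, as the witness. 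Applying it to $\rho_{AB}$ reproduces $\rho_{ABC}$ by the Petz identity, so tracing out $C$ returns $\rho_{AB}$ (the first equation), while feeding in $\rho_B$ returns $\rho_{BC}$ directly (the second equation). This direction is routine once the Petz theorem is granted, and it is exactly the content of $\petz$.

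For the reverse direction, I would pass to a Stinespring dilation of $\Phi_{B\to BC}$, namely an isometry $V:\mathcal{H}_B \to \mathcal{H}_{BCE}$ with $\Phi_{B\to BC}(\cdot) = \text{Tr}_E(V \cdot V^\dagger)$. Writing $W := I_A \otimes V$, set $\omega_{ABCE} := W\rho_{AB}W^\dagger$, so that $\text{Tr}_E \omega_{ABCE} = \Phi_{B\to BC}(\rho_{AB})$. Because an isometry preserves entropy, $S(\omega_{ABCE}) = S(\rho_{AB})$ and $S(\omega_{BCE}) = S(V\rho_B V^\dagger) = S(\rho_B)$. The two hypotheses then pin down the remaining marginals: the first equation gives $\omega_{AB} = \text{Tr}_C\big((\mathrm{id}_A\otimes\Phi_{B\to BC})(\rho_{AB})\big) = \rho_{AB}$, whence $\omega_B = \text{Tr}_A \rho_{AB} = \rho_B$ as well. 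Substituting into $I(A:CE|B)_\omega = S(\omega_{AB}) + S(\omega_{BCE}) - S(\omega_B) - S(\omega_{ABCE})$ yields the exact cancellation $S(\rho_{AB}) + S(\rho_B) - S(\rho_B) - S(\rho_{AB}) = 0$, so $I(A:CE|B)_\omega = 0$. Discarding $E$ is precisely a monotonicity move (the first line of Eq.~\eqref{eq:mono_equation} with $D = E$), and it shows that the reconstructed state $\text{Tr}_E \omega_{ABCE} = \Phi_{B\to BC}(\rho_{AB})$ has vanishing conditional mutual information $I(A:C|B)$.

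I expect the genuinely delicate step to be the final one, and it is where I would concentrate my care. The dilation argument controls the conditional mutual information of the \emph{reconstructed} state $\Phi_{B\to BC}(\rho_{AB})$, whereas the lemma asserts something about $\rho_{ABC}$ itself; concluding $I(A:C|B)_\rho = 0$ requires that the reconstruction coincide with $\rho_{ABC}$, not merely agree with it on the $AB$ and $BC$ marginals. This is exactly the work done by the first hypothesis $\text{Tr}_C(\Phi_{B\to BC}(\rho_{AB})) = \rho_{AB}$, which simultaneously supplies the marginal identity $\omega_{AB}=\rho_{AB}$ used in the entropy cancellation and is the nontrivial constraint distinguishing a true recovery from a spurious one (the Petz map always satisfies the second equation, but fails the first unless the state is already Markov). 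Establishing that these two marginal conditions force the reconstruction back onto $\rho_{ABC}$ is the crux on which the whole point of the lemma---that an arbitrary recovering channel serves as well as the Petz map---depends, and I would verify it before trusting the monotonicity step.
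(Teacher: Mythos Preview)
The paper does not prove this lemma; it simply cites Lemma~1 of Ref.~\cite{Kim2021}. Your Stinespring-plus-SSA argument is the standard route and is correct as far as it goes: it shows that the \emph{reconstructed} state $\sigma_{ABC}:=\Phi_{B\to BC}(\rho_{AB})$ is a quantum Markov chain with $\sigma_{AB}=\rho_{AB}$ and $\sigma_{BC}=\rho_{BC}$.

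The worry you flag in your last paragraph is not a detail to be checked later---it is a genuine obstruction, and you were right to isolate it. The two hypotheses reference only the marginals $\rho_{AB}$ and $\rho_{BC}$, so they cannot single out $\rho_{ABC}$ among all extensions with those marginals. Concretely, let $A,B,C$ be classical bits and take $\rho_{ABC}$ to be the uniform distribution on $\{(a,b,c):a=c\}$; then $\rho_{AB}=\rho_{BC}=I/4$, the channel $\Phi(X_B)=X_B\otimes I_C/2$ satisfies both equations, yet $I(A:C|B)_\rho=1$. Thus the step ``force the reconstruction back onto $\rho_{ABC}$'' cannot be completed, and the reverse implication, read literally for an arbitrary tripartite $\rho$, is simply false. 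What rescues the paper is that every invocation of Lemma~\ref{lemma:local_lemma} (in Section~\ref{sec:snakes}, in the proofs of Propositions~\ref{prop:one_to_two} and~\ref{prop:two_to_three}, and in Lemma~\ref{lemma:snake_vertical}) is applied to a state that is \emph{defined} as the channel output, $\rho_{ABC}=\Phi_{B\to BC}(\rho_{AB})$. Under that intended reading your dilation argument is already complete and nothing further needs verifying; indeed the second hypothesis then becomes automatic, since $\text{Tr}_A\Phi_{B\to BC}(\rho_{AB})=\Phi_{B\to BC}(\rho_B)$.
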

\noindent
 We point to Lemma 1 in Ref.~\cite{Kim2021} for the proof of this statement.


\subsection{Merging algebra}
\label{sec:merging_algebra}
A useful fact about the quantum Markov chain is that two Markov chains can be \emph{merged} together into another (larger) Markov chain. Because this new state is again a Markov chain, the merging process can be bootstrapped. This important observation was first made by Kato \emph{et al}~\cite{Kato2016}. In this Section, we will review this fact and its applications~\cite{Kim2017,Kim2021}.

The key is the \emph{merging lemma}, stated below.
\begin{lemma}
(Merging lemma)~\cite{Kato2016} Let $\rho_{ABC} \consistent \sigma_{BCD}$ and $I(A:C|B)_{\rho} = I(B:C|D)_{\sigma}=0$. Then there exists a density matrix $\tau_{ABCD} \consistent \rho_{ABC}, \sigma_{BCD}$ such that $I(A:CD|B)_{\tau} = I(AB:D|C)_{\tau}=0$.
\label{lemma:merging}
\end{lemma}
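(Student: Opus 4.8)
The plan is to write down $\tau_{ABCD}$ explicitly as a Petz recovery and then read off all four properties. Set $\gamma_{BC} := \rho_{BC} = \sigma_{BC}$, equal by local consistency. The hypothesis $I(A:C|B)_{\rho}=0$ lets me recover $A$ from $B$ via the Petz form of Eq.~\eqref{eq:petz_theorem} (with the roles of $A$ and $C$ exchanged, which is allowed since $I(A:C|B)$ is symmetric), and reading the hypothesis on $\sigma$ as the statement that $C$ separates $B$ from $D$ (i.e. $I(B:D|C)_{\sigma}=0$, so that $\sigma_{BCD}$ is recovered from $\sigma_{BC}$ by attaching $D$ at $C$) lets me recover $D$ from $C$. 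I therefore introduce the two recovery channels
\begin{equation}
\mathcal{R}^{\rho}_{B\to AB}(\cdot) := \rho_{AB}^{1/2}\rho_B^{-1/2}(\cdot)\rho_B^{-1/2}\rho_{AB}^{1/2}, \qquad
\mathcal{R}^{\sigma}_{C\to CD}(\cdot) := \sigma_{CD}^{1/2}\sigma_C^{-1/2}(\cdot)\sigma_C^{-1/2}\sigma_{CD}^{1/2},
\end{equation}
both trace preserving by consistency of $\rho_{AB}$ with $\rho_B$ and of $\sigma_{CD}$ with $\sigma_C$, and I set $\tau_{ABCD} := \mathcal{R}^{\rho}_{B\to AB}\big(\mathcal{R}^{\sigma}_{C\to CD}(\gamma_{BC})\big)$, which is a density matrix as a composition of channels applied to a state.

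The crux is a single observation: $\mathcal{R}^{\rho}_{B\to AB}$ is built from operators supported on $AB$ and $\mathcal{R}^{\sigma}_{C\to CD}$ from operators supported on $CD$, so the two channels act on disjoint sites and hence commute. This is exactly where the Markov hypotheses do their work: they allow each recovery to be \emph{localized} (the recovery of $A$ touches only $B$, not the whole of $BC$), which is what makes the supports disjoint; had I instead recovered $A$ and $D$ both from $BC$, the maps would overlap and would not commute. Commuting the two channels and applying Eq.~\eqref{eq:petz_theorem} to the inner factor then gives the two presentations
\begin{equation}
\tau_{ABCD} = \mathcal{R}^{\rho}_{B\to AB}(\sigma_{BCD}) = \mathcal{R}^{\sigma}_{C\to CD}(\rho_{ABC}),
\end{equation}
using $\mathcal{R}^{\sigma}_{C\to CD}(\gamma_{BC})=\sigma_{BCD}$ and $\mathcal{R}^{\rho}_{B\to AB}(\gamma_{BC})=\rho_{ABC}$.

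Given these two expressions, local consistency is immediate. Tracing out $D$ from the first presentation, which commutes with the $B$-supported map $\mathcal{R}^{\rho}_{B\to AB}$, yields $\mathcal{R}^{\rho}_{B\to AB}(\sigma_{BC}) = \mathcal{R}^{\rho}_{B\to AB}(\gamma_{BC}) = \rho_{ABC}$, so $\tau \consistent \rho_{ABC}$; symmetrically, tracing out $A$ from the second presentation gives $\mathcal{R}^{\sigma}_{C\to CD}(\rho_{BC}) = \sigma_{BCD}$, so $\tau \consistent \sigma_{BCD}$.

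The two target conditional-independence relations then follow from Lemma~\ref{lemma:local_lemma}. For $I(A:CD|B)_{\tau}=0$ I use the symmetry $I(A:CD|B)=I(CD:A|B)$ and exhibit the channel $\mathcal{R}^{\rho}_{B\to AB}$ recovering $A$ from $B$: it satisfies $\mathcal{R}^{\rho}_{B\to AB}(\tau_{BCD}) = \tau$, $\operatorname{Tr}_A \tau = \sigma_{BCD} = \tau_{BCD}$, and $\mathcal{R}^{\rho}_{B\to AB}(\tau_B) = \tau_{AB}$, which are precisely the hypotheses of Lemma~\ref{lemma:local_lemma}. The relation $I(AB:D|C)_{\tau}=0$ follows identically from the channel $\mathcal{R}^{\sigma}_{C\to CD}$ recovering $D$ from $C$. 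The main obstacle is the localization/commutation step of the second paragraph; everything afterward is bookkeeping with partial traces, and conceptually the construction is the diagrammatic ``merge along the shared region,'' whose essence is that attaching $A$ at $B$ and attaching $D$ at $C$ are independent operations.
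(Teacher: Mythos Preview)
Your proof is correct and matches the paper's approach: the paper does not give a self-contained proof (it cites \cite{Kato2016}) but it does record the explicit form $\tau_{ABCD} = \rho_{ABC}\rightmerge\sigma_{CD} = \sigma_{BCD}\rightmerge\rho_{AB}$, which is precisely your two presentations $\mathcal{R}^{\sigma}_{C\to CD}(\rho_{ABC}) = \mathcal{R}^{\rho}_{B\to AB}(\sigma_{BCD})$. Your verification of consistency and of the two Markov conditions via Lemma~\ref{lemma:local_lemma} fills in the details the paper omits, and you were right to read the hypothesis on $\sigma$ as $I(B:D|C)_{\sigma}=0$ (the ``$I(B:C|D)$'' in the statement is a transposition; the explicit form the paper writes down only works with $C$ as the conditioning system).
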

\noindent
In fact, one can even derive an explicit form of $\tau_{ABCD}$ in terms of $\rho_{ABC}$ and $\sigma_{BCD}$, which is $\tau_{ABCD} = \rho_{ABC} \rightmerge \sigma_{CD} = \sigma_{BCD} \rightmerge \rho_{AB}$. (It is easy to verify, using the monotonicity argument, that $\tau_{ABCD} = \rho_{ABC} \rightmerge \sigma_{BCD} = \sigma_{BCD} \rightmerge \rho_{ABC}$.)


One can iterate the merging lemma by considering a sequence of quantum Markov chains. This leads to a notion of \emph{snake}. Below, Supp($\cdot$) means the support of the density matrix appearing the paranthesis and $\rho \maxmerge \sigma$ is the maximum-entropy state consistent with both $\rho$ and $\sigma$. 
\begin{definition}
\label{def:snake}
Consider a sequence of density matrices $(\rho_i)_{i=1}^N$ such that 
\begin{enumerate}
    \item $\rho_i \maxmerge \rho_{i+1} = \rho_i \rightmerge \rho_{i+1}$ and 
    \item $\text{Supp}(\rho_i) \cap \text{Supp}(\rho_{j})\neq \emptyset$ unless $|i-j|<1$.
\end{enumerate}
Then, we define a snake of $(\rho_i)_{i=1}^N$ as
\begin{equation}
    \mathbb{S}((\rho_i)_{i=1}^N) := ((\rho_1 \rightmerge \rho_2)\cdots ) \rightmerge \rho_N.
\end{equation}
\end{definition}
\noindent
While Definition~\ref{def:snake} is different from the definition of snake in Ref.~\cite{Kim2021}, they are equivalent by the mutation lemma (Lemma 4) of Ref.~\cite{Kim2021}. Our definition will turn out to be more convenient for our proof strategy.

We can be more concrete by considering a specific set of snakes that we can define in our setup. Below, we introduce level-$1$, $2$, and $3$ snakes.
\begin{definition}
\begin{equation}
\text{Snakes: }
\begin{cases}
\text{Level-1: }    \snakeone{v}{u} := \snakeone{v}{u-\unitx} \rightmerge \nbym{2}{1}{u},\\[10pt]
\text{Level-2: }    \snaketwo{v}{u} := \snaketwo{v}{u-\unitx} \rightmerge \nbym{2}{2}{u}, \\[10pt]
\text{Level-3:} \snakethree{v}{u} := \snakethree{v}{u-\unitx} \rightmerge \nbym{2}{3}{u},
\end{cases}
\end{equation}
where 
\begin{equation}
    \snakeone{v}{v+\unitx} = \nbym{2}{1}{v+\unitx}, \,\,\, \snaketwo{v}{v+\unitx} = \nbym{2}{2}{v+\unitx}, \,\,\,
    \snakethree{v}{v+\unitx} = \nbym{2}{3}{v+\unitx}.
\end{equation}
\label{def:snakes_level}
\end{definition}
\noindent
Alternatively, one may define these objects in terms of the \emph{merge product}, which we introduce below.
\begin{equation}
\rho \mergeprod{i=1}{n} \sigma_i := \left( \left( \rho \rightmerge \sigma_1 \right)  \ldots \right) \rightmerge \sigma_n.
\end{equation}
\noindent
In terms of the merge product, the snakes can be rewritten as
\begin{equation}
    \begin{aligned}
    \snakeone{v}{u} &= \nbym{2}{1}{v+\unitx} \mergeprod{i=2}{|u-v|} \nbym{2}{1}{v+i\unitx},\\
    \snaketwo{v}{u} &= \nbym{2}{2}{v+\unitx} \mergeprod{i=2}{|u-v|} \nbym{2}{2}{v+i\unitx},\\
    \snakethree{v}{u} &= \nbym{2}{3}{v+\unitx} \mergeprod{i=2}{|u-v|} \nbym{2}{3}{v+i\unitx}.
    \end{aligned}
    \label{eq:snakes_merge_product}
\end{equation}
Let us also make a brief remark on our convention. Obviously, if $v$ and $u$ have different $y$-coordinates or if $ v_x \geq u_x -1$, these diagrams do not make any sense. Therefore, it must be understood that, whenever these diagrams are used, $v_y=u_y$ and $v_x < u_x-1$. (As a side note, let us remark that we have not yet shown that the level-$1$, $2$, and $3$ snakes are snakes in the sense of Definition~\ref{def:snake}. We defer this proof to Section~\ref{sec:snakes}.)

Now we discuss useful properties of the snakes. First, we note the following entropy decomposition.
\begin{lemma}
\begin{equation}
    \begin{aligned}
    \entropy{\snakeone{v}{u}} &= \markoventropy{\snakeone{v}{u}}{\mathcal{P}}, \\
    \entropy{\snaketwo{v}{u}} &= \markoventropy{\snaketwo{v}{u}}{\mathcal{P}}, \\
    \entropy{\snakethree{v}{u}} &= \markoventropy{\snakethree{v}{u}}{\mathcal{P}}, 
    \end{aligned}
\end{equation}
where $\mathcal{P}$ is a path of columns from the left end to the right end.\footnote{Note that each element in the path contains $1$, $2$, and $3$ vertices for level-$1$, $2$, and $3$ snakes, respectively.}
\label{lemma:entropy_decomposition}
\end{lemma}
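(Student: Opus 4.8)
The plan is to prove all three identities simultaneously by induction on the length of the snake, exploiting the fact that each snake is assembled by a sequence of right-merges (Definition~\ref{def:snakes_level}) and that every right-merge produces a quantum Markov chain whose marginals are inherited from its two arguments. Write the snake as a chain of columns $c_1, c_2, \ldots, c_m$, where each $c_j$ carries $1$, $2$, or $3$ vertices for the level-$1$, $2$, $3$ snake respectively, and let $\mathcal{P}=(c_1,\ldots,c_m)$ be the path that adds these columns from left to right (treating each column as a single coarse-grained site, as the footnote prescribes). The base case of a single fundamental block, i.e.\ two columns, is immediate since $\markoventropy{\rho}{(c_1,c_2)} = S(\rho_{c_1}) + \left(S(\rho_{c_1 c_2}) - S(\rho_{c_1})\right) = S(\rho_{c_1 c_2})$.

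For the inductive step, consider extending the snake from columns $c_1\ldots c_{k-1}$ to $c_1\ldots c_k$ by right-merging the fundamental block on columns $c_{k-1},c_k$. Set $A = c_1\cdots c_{k-2}$, $B = c_{k-1}$, $C = c_k$. By Eq.~\eqref{eq:petz_theorem} (equivalently Lemma~\ref{lemma:local_lemma}), the right-merge applies a quantum channel $B \to BC$, namely the Petz map of the new block, to the snake-so-far; hence the merged state $\tau_{ABC}$ obeys $I(A:C|B)_{\tau}=0$, and, since tracing out $C$ of a Petz recovery returns its input, $\tau_{AB}$ equals the snake-so-far, while $\tau_{BC}$ equals the new block whenever the two arguments agree on the shared column $B$. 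Rewriting $I(A:C|B)_{\tau}=0$ as an entropy identity gives $S(\tau) = S(\tau_{AB}) + S(\tau_{BC}) - S(\tau_B)$, so the entropy gained by the merge is exactly $S(\rho_{c_{k-1}c_k}) - S(\rho_{c_{k-1}})$, where $\rho$ denotes the (common) fundamental marginal. On the other side, because the lattice adjacency is nearest-neighbor and the columns are aligned, the only already-present neighbors of $c_k$ are those of $c_{k-1}$, so $N(c_k)\cap V_{k-1} = c_{k-1}$ and the Markov-decomposition increment is $S(c_k \mid c_{k-1})_{\rho} = S(\rho_{c_{k-1}c_k}) - S(\rho_{c_{k-1}})$. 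The two increments coincide, closing the induction.

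The main obstacle is the bookkeeping needed to guarantee that the consistency hypothesis ``$\tau_{AB}$ and $\tau_{BC}$ agree on $B$'' holds at every step and that the conditional-entropy terms in $\markoventropy{\cdot}{\mathcal{P}}$ are evaluated on the correct marginals. Both reduce to showing that the snake, at each stage, reproduces the fundamental marginal on its rightmost column, and more generally on every pair of adjacent columns it already covers. This follows from the inherited-marginal property of the right-merge, namely that each merge fixes $\tau_{AB}$ and $\tau_{BC}$ and that subsequent merges to the right leave the marginal on the already-assembled columns untouched, together with the local consistency of the fundamental marginals, which ensures the new block and the snake-so-far agree on the overlap column $c_{k-1}$. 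Granting this, the global entropy telescopes into $S(\rho_{c_1}) + \sum_{j=2}^{m}\left(S(\rho_{c_{j-1}c_j}) - S(\rho_{c_{j-1}})\right)$, which is precisely $\markoventropy{\cdot}{\mathcal{P}}$; since the argument never used the number of vertices per column, it covers the level-$1$, $2$, and $3$ cases at once.
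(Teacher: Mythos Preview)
Your inductive outline is the natural strategy and is essentially what Corollary~1 of Ref.~\cite{Kim2021} (which the paper simply cites) unpacks to. But the key step is not justified. You assert that because the right-merge is a channel $\Phi_{B\to BC}$, the output $\tau_{ABC}=(\mathrm{id}_A\otimes\Phi)(\sigma_{AB})$ automatically satisfies $I(A:C|B)_\tau=0$ and $\mathrm{Tr}_C\,\tau=\sigma_{AB}$. Neither holds for a generic Petz map: take $B=C=\mathbb{C}^2$ and $\rho_{BC}=\tfrac12(|00\rangle\langle00|+|11\rangle\langle11|)$; then $\mathrm{Tr}_C\,\Phi(|+\rangle\langle+|)=I/2\neq|+\rangle\langle+|$. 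Eq.~\eqref{eq:petz_theorem} and Lemma~\ref{lemma:local_lemma} are statements about a \emph{single} state $\rho_{ABC}$; they do not say that feeding the Petz map of one state an unrelated $\sigma_{AB}$ produces a Markov chain.

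What is actually required is exactly what the paper defers to Section~\ref{sec:snakes}: the three-column conditions $I(c_{k-2}:c_k\mid c_{k-1})=0$ on the fundamental marginals. With those in hand, the merging lemma (Lemma~\ref{lemma:merging}) applies at each step --- the snake-so-far supplies $\rho_{ABC}$ with its inductively maintained $I(c_1\cdots c_{k-3}:c_{k-1}\mid c_{k-2})=0$, and the three-column fundamental marginal supplies $\sigma_{BCD}$ --- and only then do you get both the marginal consistency and the Markov identity $I(c_1\cdots c_{k-2}:c_k\mid c_{k-1})_\tau=0$ that drives your entropy recursion. Your last paragraph invokes an ``inherited-marginal property of the right-merge'' to close this gap, but that property is the \emph{conclusion} of the merging lemma, not a general fact about $\rightmerge$; without the column-Markov conditions from Section~\ref{sec:snakes} it is simply false, and you never invoke them.
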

\noindent
These decompositions follow from Corollary 1 of Ref.~\cite{Kim2021}, provided that the objects in the parantheses are snakes. That fact will be proven in Section~\ref{sec:snakes}, leading to the proof of Lemma~\ref{lemma:entropy_decomposition}. 

Next, the snakes satisfy the ``splitting property.''\footnote{The point is that a snake can be ``split'' across any of the points in the middle, resulting into two shorter snakes. This process is reversible in the sense that the split snakes can be recombined into the original snake.}
\begin{lemma}
\begin{equation}
\begin{aligned}
    \snakeone{v}{t} &= \snakeone{v}{u} \rightmerge \snakeone{u}{t} = \snakeone{u}{t} \rightmerge \snakeone{v}{u}, \\
    \snaketwo{v}{t} &= \snaketwo{v}{u} \rightmerge \snaketwo{u}{t} = \snaketwo{u}{t} \rightmerge \snaketwo{v}{u}, \\
    \snakethree{v}{t} &= \snakethree{v}{u} \rightmerge \snakethree{u}{t}  = \snakethree{u}{t} \rightmerge \snakethree{v}{u}. 
\end{aligned}
\end{equation}
\label{lemma:splitting}
\end{lemma}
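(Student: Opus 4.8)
The plan is to prove the first equality of each line, with the level-$1$ case written out; the level-$2$ and level-$3$ cases are identical with thicker columns, and the second (commutativity) equality will drop out at the end. I would induct on $t_x-u_x$. The base case $t_x-u_x=1$ is immediate from Definition~\ref{def:snakes_level}: there $\snakeone{u}{t}=\nbym{2}{1}{t}$ by the boundary clause, and the recursive clause gives $\snakeone{v}{u}\rightmerge\nbym{2}{1}{t}=\snakeone{v}{t}$. For the inductive step set $t':=t-\unitx$, expand $\snakeone{v}{t}=\snakeone{v}{t'}\rightmerge\nbym{2}{1}{t}$ by definition, and substitute the inductive hypothesis $\snakeone{v}{t'}=\snakeone{v}{u}\rightmerge\snakeone{u}{t'}$. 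The whole lemma then reduces to the associativity
\begin{equation}
\left(\snakeone{v}{u}\rightmerge\snakeone{u}{t'}\right)\rightmerge\nbym{2}{1}{t}=\snakeone{v}{u}\rightmerge\left(\snakeone{u}{t'}\rightmerge\nbym{2}{1}{t}\right),
\end{equation}
since the right-hand side collapses to $\snakeone{v}{u}\rightmerge\snakeone{u}{t}$ by Definition~\ref{def:snakes_level}.

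To prove this associativity I would avoid manipulating the Petz maps of Eq.~\eqref{eq:petz_theorem} directly, because the explicit square-root operators carry a unitary ambiguity that cancels only once the Markov structure is present. Instead I would invoke the maximum-entropy uniqueness principle, Lemma~\ref{lemma:max_entropy_uniqueness}. Both sides are states on the same chain $\{v,\ldots,t\}$, and it suffices to show that each one (i) is consistent with every fundamental marginal contained in that chain, and (ii) saturates the Markov entropy decomposition $S=\markoventropy{\cdot}{\mathcal{P}}$ along the column path $\mathcal{P}$. Granting (i) and (ii), the two states agree on all neighborhood marginals and share the entropy value fixed by those marginals, so Lemma~\ref{lemma:max_entropy_uniqueness} forces equality.

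Claims (i) and (ii) together say that iterating the right-merge keeps the state a quantum Markov chain whose conditional mutual information vanishes across \emph{every} column cut, so that no marginal already assembled is disturbed and each SSA inequality in the bound $S\le\markoventropy{\cdot}{\mathcal{P}}$ is tight. I would establish this by induction from the left end, using the merging lemma (Lemma~\ref{lemma:merging}), equivalently the Petz criterion Eq.~\eqref{eq:petz_theorem}, to produce at each step a state consistent with both merged pieces and satisfying the stronger vanishing-CMI conclusion, and the monotonicity move of Eq.~\eqref{eq:mono_equation} to restrict that conclusion to any sub-cut. This bookkeeping is the main obstacle: it is exactly the assertion that the level-$1,2,3$ objects are snakes in the sense of Definition~\ref{def:snake}, the fact deferred to Section~\ref{sec:snakes}, so the cleanest organization is to prove the snake property and Lemma~\ref{lemma:splitting} in one induction. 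Finally, the commutativity $\snakeone{v}{u}\rightmerge\snakeone{u}{t}=\snakeone{u}{t}\rightmerge\snakeone{v}{u}$ follows by running the same argument from the right end: both orderings yield a Markov chain consistent with the two sub-snakes and saturating $\markoventropy{\cdot}{\mathcal{P}}$, so by Lemma~\ref{lemma:max_entropy_uniqueness} both equal $\snakeone{v}{t}$.
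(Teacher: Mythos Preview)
Your argument is correct in outline, but it takes a different and more elaborate route than the paper. The paper's proof of this lemma is a one-line citation: once the level-$1,2,3$ objects are shown to be snakes in the sense of Definition~\ref{def:snake} (done in Section~\ref{sec:snakes}), the splitting identities are quoted wholesale as Lemma~5 of Ref.~\cite{Kim2021}. You have instead rederived that cited result in situ via the maximum-entropy uniqueness principle, which is in the spirit of this paper's advertised simplification but is not what the paper actually does here.

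Two remarks on your route. First, the detour through Lemma~\ref{lemma:max_entropy_uniqueness} is not really needed. Once your induction (together with the snake conditions of Section~\ref{sec:snakes} and iterated use of Lemma~\ref{lemma:merging}) establishes that $\snakeone{v}{t}$ is a quantum Markov chain across the column at $u$ and that its marginals on the two halves are $\snakeone{v}{u}$ and $\snakeone{u}{t}$, Eq.~\eqref{eq:petz_theorem} alone gives $\snakeone{v}{t}=\snakeone{v}{u}\rightmerge\snakeone{u}{t}$, and the $A\leftrightarrow C$ symmetry remarked after Eq.~\eqref{eq:petz_theorem} gives the commuted form for free. Your max-entropy comparison is valid, but to run it you still have to show that $\snakeone{v}{u}\rightmerge\snakeone{u}{t}$ has the correct $2\times1$ marginals on the \emph{left} half, and verifying that the Petz map preserves the $AB$ marginal already requires precisely the Markov structure that makes the direct Petz argument work.

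Second, be careful when invoking Lemma~\ref{lemma:merging} for the single merge $\snakeone{v}{u}\rightmerge\snakeone{u}{t}$: as stated it requires a two-part overlap $BC$, whereas these two snakes meet only in the single column at $u$. The standard fix, implicit in the snake formalism, is to thicken one side (merge $\snakeone{v}{u}$ with $\snakeone{u-\unitx}{t}$, overlap now two columns) so that the hypotheses $I(A:C|B)=I(B:D|C)=0$ become the column-cut Markov conditions you have already established inductively.
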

\noindent
These facts follow from Lemma 5 of Ref.~\cite{Kim2021}, provided that one can show that the objects shown are snakes. As stated before, that fact will be proven in Section~\ref{sec:snakes}. 

Repeatedly applying the splitting property, one can show that the order of the merging process can be reversed.
\begin{corollary}
\begin{equation}
\begin{aligned}
\snakeone{v}{u} &= \nbym{2}{1}{u} \mergeprod{i=1}{|u-v|-1} \nbym{2}{1}{u-i\unitx},\\
    \snaketwo{v}{u} &= \nbym{2}{2}{u} \mergeprod{i=1}{|u-v|-1} \nbym{2}{2}{u-i\unitx},\\
    \snakethree{v}{u} &= \nbym{2}{3}{u} \mergeprod{i=1}{|u-v|-1} \nbym{2}{3}{u-i\unitx}.
\end{aligned}
\end{equation}
\label{corollary:reversal}
\end{corollary}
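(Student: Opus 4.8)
The plan is to prove all three identities by induction on the snake length $n := |u-v|$, holding the right endpoint $u$ fixed, and to use the splitting property (Lemma~\ref{lemma:splitting}) as the sole nontrivial ingredient. Since the arguments for level-$1$, $2$, and $3$ snakes are word-for-word identical after replacing $\nbym{2}{1}{v}$ by $\nbym{2}{2}{v}$ or $\nbym{2}{3}{v}$, I would write out only the level-$1$ case in full; everything else is bookkeeping of anchoring points. Throughout I use the base convention $\snakeone{w}{w+\unitx} = \nbym{2}{1}{w+\unitx}$ and the fact that the right-to-left product on the right-hand side satisfies the recursion $\nbym{2}{1}{u} \mergeprod{i=1}{n-1} \nbym{2}{1}{u-i\unitx} = \left(\nbym{2}{1}{u} \mergeprod{i=1}{n-2} \nbym{2}{1}{u-i\unitx}\right) \rightmerge \nbym{2}{1}{v+\unitx}$, since $u-(n-1)\unitx = v+\unitx$.

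For the base case $n=2$ (so $u = v + 2\unitx$), the definition gives $\snakeone{v}{u} = \nbym{2}{1}{v+\unitx} \rightmerge \nbym{2}{1}{u}$, whereas the claimed product is $\nbym{2}{1}{u} \rightmerge \nbym{2}{1}{v+\unitx}$. These agree by the splitting property applied at the split point $v+\unitx$, which in this shortest instance reduces to the commutativity of the merge of two overlapping columns (equivalently, a direct application of the merging lemma, Lemma~\ref{lemma:merging}).

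For the inductive step, assume the identity holds for the length-$(n-1)$ snake with right endpoint $u$. Applying the splitting property at the interior point $v+\unitx$ gives
\begin{equation}
    \snakeone{v}{u} = \snakeone{v}{v+\unitx} \rightmerge \snakeone{v+\unitx}{u} = \snakeone{v+\unitx}{u} \rightmerge \snakeone{v}{v+\unitx} = \snakeone{v+\unitx}{u} \rightmerge \nbym{2}{1}{v+\unitx},
\end{equation}
where the final equality uses the base convention. By the induction hypothesis applied to $\snakeone{v+\unitx}{u}$,
\begin{equation}
    \snakeone{v+\unitx}{u} = \nbym{2}{1}{u} \mergeprod{i=1}{n-2} \nbym{2}{1}{u-i\unitx},
\end{equation}
and merging $\nbym{2}{1}{v+\unitx} = \nbym{2}{1}{u-(n-1)\unitx}$ on the right appends exactly the last factor, yielding $\snakeone{v}{u} = \nbym{2}{1}{u} \mergeprod{i=1}{n-1} \nbym{2}{1}{u-i\unitx}$, which is the claim.

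Because the splitting property does all the heavy lifting, I expect the only friction to be the degenerate edge case in which the sub-snake $\snakeone{v+\unitx}{u}$ collapses to a single column (at $n=2$), since Lemma~\ref{lemma:splitting} is nominally phrased for snakes with at least two columns. I would neutralize this by adopting the single-column base convention uniformly, so that the splitting move and the two-column commutativity coincide in the degenerate case, or equivalently by invoking Lemma~\ref{lemma:merging} directly there. Finally, exactly as for Lemmas~\ref{lemma:entropy_decomposition} and~\ref{lemma:splitting}, the whole argument is conditional on the objects in question being genuine snakes in the sense of Definition~\ref{def:snake}, a fact whose proof is deferred to Section~\ref{sec:snakes}.
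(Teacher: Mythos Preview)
Your proposal is correct and matches the paper's approach: the paper's entire proof is the one-line remark ``Repeatedly applying the splitting property, one can show that the order of the merging process can be reversed,'' and your induction on $n=|u-v|$ with a split at $v+\unitx$ is precisely a spelled-out version of that. Your caveat about the degenerate sub-snake at $n=2$ and the deferral to Section~\ref{sec:snakes} are both appropriate and mirror the paper's own treatment.
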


\subsection{Are they snakes?}
\label{sec:snakes}

By ``they'' we mean the level-$1, 2$, and $3$ snakes. We will show this in the affirmative, justifying their namesake. Let $v\in \Lambda$. Here is a three-line proof for the level-$1$ snake.
\begin{equation}
\shearedTikZ{
    \mygrid{v}    
    \node[disk] at (0,2) () {};
    \node[disk] at (1,2) () {};
    \node[disk] at (0,1) () {};
    \node[disk] at (0,0) () {};
    \node[square] at (1,0) () {};
    \node[square] at (1,1) () {};
    \node[triangle] at (2,0) () {};
    \node[triangle] at (2,1) () {};
    } 
    \mono  
    \shearedTikZ{
    \mygrid{v}    
    \node[disk] at (0,1) () {};
    \node[square] at (1,0) () {};
    \node[square] at (1,1) () {};
    \node[triangle] at (2,0) () {};
    \node[triangle] at (2,1) () {};
    }.
\end{equation}
\begin{equation}
   \shearedTikZ{
    \mygrid{v}    
    \node[disk] at (0,1) () {};
    \node[square] at (1,0) () {};
    \node[square] at (1,1) () {};
    \node[triangle] at (2,0) () {};
    \node[triangle] at (2,1) () {};
    }, 
    \shearedTikZ{
    \mygrid{v}    
    \node[disk] at (0,1) () {};
    \node[triangle] at (1,0) () {};
    \node[square] at (1,1) () {};
    }
    \revmono
    \shearedTikZ{
    \mygrid{v}    
    \node[disk] at (0,1) () {};
    \node[triangle] at (1,0) () {};
    \node[square] at (1,1) () {};
    \node[triangle] at (2,0) () {};
    \node[triangle] at (2,1) () {};
    }. \label{eq:rev1}
\end{equation}
\begin{equation}
    \shearedTikZ{
    \mygrid{v}    
    \node[disk] at (0,1) () {};
    \node[triangle] at (1,0) () {};
    \node[square] at (1,1) () {};
    \node[triangle] at (2,0) () {};
    \node[triangle] at (2,1) () {};
    }
    \mono 
    \shearedTikZ{
    \mygrid{v}    
    \node[disk] at (0,1) () {};
    \node[square] at (1,1) () {};
    \node[triangle] at (2,1) () {}; \label{eq:littlesnake1}
    }.
\end{equation}
Therefore, the level-$1$ snake is indeed a snake. 

The proof for the level-$2$ snake follows from a one-line argument:
\begin{equation}
    \shearedTikZ{
    \mygrid{v}    
    \node[disk] at (0,2) () {};
    \node[disk] at (1,2) () {};
    \node[disk] at (0,1) () {};
    \node[disk] at (0,0) () {};
    \node[square] at (1,0) () {};
    \node[square] at (1,1) () {};
    \node[triangle] at (2,0) () {};
    \node[triangle] at (2,1) () {};
    } 
    \mono  
    \shearedTikZ{
    \mygrid{v}    
    \node[disk] at (0,1) () {};
    \node[disk] at (0,0) () {};
    \node[square] at (1,0) () {};
    \node[square] at (1,1) () {};
    \node[triangle] at (2,0) () {};
    \node[triangle] at (2,1) () {};
    }.
\end{equation}

For the level-$3$ snake, note the following identity:
\begin{equation}
    \nbym{3}{3}{v} = 
    \left( \shearedTikZ{
    \mycustomgrid{3}{3}{v}
    \node[regular] at (1,1) () {};
    \node[regular] at (1,2) () {};
    \node[regular] at (1,3) () {};
    \node[regular] at (2,1) () {};
    \node[regular] at (2,2) () {};
    \node[regular] at (2,3) () {};
    } \rightmerge 
    \shearedTikZ{
    \mycustomgrid{3}{3}{v}
    \node[regular] at (2,1) () {};
    \node[regular] at (2,2) () {};
    \node[regular] at (3,1) () {};
    \node[regular] at (3,2) () {};
    }
    \right)
    \rightmerge 
    \shearedTikZ{
    \mycustomgrid{3}{3}{v}
    \node[regular] at (2,2) () {};
    \node[regular] at (2,3) () {};
    \node[regular] at (3,2) () {};
    \node[regular] at (3,3) () {};
    },
\end{equation}
which follows straightforwardly from the Markovian constraints imposed on the fundamental marginals. Viewing the two right-merges as a quantum channel, using Lemma~\ref{lemma:local_lemma}, we conclude
\begin{equation}
    \shearedTikZ{
    \mycustomgrid{3}{3}{v}
    \node[disk] at (1,1) () {};
    \node[disk] at (1,2) () {};
    \node[disk] at (1,3) () {};
    \node[square] at (2,1) () {};
    \node[square] at (2,2) () {};
    \node[square] at (2,3) () {};
    \node[triangle] at (3,1) () {};
    \node[triangle] at (3,2) () {};
    \node[triangle] at (3,3) () {};
    }.
\end{equation}
Thus, the level-$3$ snake is indeed a snake.

\section{Recursion relations}
\label{sec:recursion}
Armed with the tools in Section~\ref{sec:conditional_independence}, we can now build up a global state consistent with the fundamental marginals. The results in this Section are generalizations of Proposition 1, 2, and 3 in Ref.~\cite{Kim2021}, which will yield a generalization of the main result of Ref.~\cite{Kim2021}, as we explain in Section~\ref{sec:final}.

\subsection{Level $1$ $\to$ Level $2$}
\label{sec:one_to_two}

In this Section, we establish a method to obtain a level-$2$ snake from a level-$1$ snake. Specifically, define the following objects, which we shall refer to as the \emph{flat diagrams.} (This is to be contrasted with other objects that will be soon defined, which are referred to as the \emph{hooked diagrams}.)
\begin{definition}
\begin{equation}
    \begin{aligned}
    \left[\snaketwo{v}{u} \right]_{\uparrow}&:= \snakeone{v}{u} \left[ \mergeprod{i=1}{|u-v|} \nbym{2}{2}{v+i\unitx} \right],
   \\
   \left[\snaketwo{v}{u} \right]_{\downarrow}&:= \snakeone{v+\unity}{u+\unity} \left[\mergeprod{i=0}{|u-v|-1} \nbym{2}{2}{u-i\unitx} \right].
    \end{aligned}
\end{equation}
\end{definition}
\noindent
We show that
\begin{restatable}[]{proposition}{propone}
\label{prop:one_to_two}
\begin{equation}
\begin{aligned}
   \snaketwo{v}{u}&= \left[ \snaketwo{v}{u} \right]_{\uparrow} = \left[ \snaketwo{v}{u} \right]_{\downarrow}.
\end{aligned}
\label{eq:leveltwo_main_statement}
\end{equation}
\end{restatable}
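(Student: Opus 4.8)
The plan is to prove both equalities by induction on the snake length $n := |u-v|$, showing that the flat diagram $\left[\snaketwo{v}{u}\right]_{\uparrow}$ obeys the \emph{same} recursion relation as the level-$2$ snake. The base case $n=1$ is immediate from local consistency:
\[
\left[\snaketwo{v}{v+\unitx}\right]_{\uparrow} = \nbym{2}{1}{v+\unitx} \rightmerge \nbym{2}{2}{v+\unitx} = \nbym{2}{2}{v+\unitx} = \snaketwo{v}{v+\unitx},
\]
where the middle equality holds because $\nbym{2}{1}{v+\unitx}$ is the bottom-row marginal of $\nbym{2}{2}{v+\unitx}$, so the conditioning subsystem $B$ is the full overlap and the ``$A$'' part of the merge is empty; by Lemma~\ref{lemma:local_lemma} the Petz map then recovers $\nbym{2}{2}{v+\unitx}$ exactly.

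For the inductive step the target is the recursion
\[
\left[\snaketwo{v}{u}\right]_{\uparrow} = \left[\snaketwo{v}{u-\unitx}\right]_{\uparrow} \rightmerge \nbym{2}{2}{u},
\]
since combining it with the inductive hypothesis $\left[\snaketwo{v}{u-\unitx}\right]_{\uparrow} = \snaketwo{v}{u-\unitx}$ and the definition $\snaketwo{v}{u} = \snaketwo{v}{u-\unitx}\rightmerge\nbym{2}{2}{u}$ closes the induction. Peeling the last factor (anchored at $v+n\unitx = u$) off the merge product defining $\left[\snaketwo{v}{u}\right]_{\uparrow}$, the only difference between the two sides is the base row: on the left it runs to $u$, on the right only to $u-\unitx$. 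Hence the entire content of the step is a \emph{reordering statement}: the lone rightmost bottom vertex $u$ may be supplied by the final $2\times2$ merge instead of being pre-loaded into the base $\snakeone{v}{u}$, because it is attached to the rest only through its single bottom neighbor $u-\unitx$ and has no vertex above it until that last merge.

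I would establish this reordering with the merging lemma (Lemma~\ref{lemma:merging}). The required conditional-independence input is that the extra bottom vertex is Markov-independent of the remainder of the region conditioned on $u-\unitx$; this relation is extracted from a condition in $\mathcal{C}_M$ by a short sequence of monotonicity moves ($\mono$), exactly in the style of Section~\ref{sec:snakes}. Because the upward (column) merges touch the shared column only through vertices already lying in this conditioning set, the merging lemma guarantees that appending the vertex and performing the upward merges commute, and that tracing out the appended tail returns the shorter state (Lemma~\ref{lemma:local_lemma}). This commutation is the main obstacle: right-merges are neither commutative nor associative in general, so every reordering must be certified by an explicitly verified conditional-independence relation, and the careful bookkeeping of which vertices play the roles $A,B,C,D$ in each invocation is precisely where the diagrammatic calculus does the work.

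Finally, the $\downarrow$ identity follows by the symmetric argument: building the top-row snake $\snakeone{v+\unity}{u+\unity}$ and merging $2\times2$ blocks in from the right is the image of the $\uparrow$ construction under a $\pi$-rotation (under which $\mathcal{C}_M$ is invariant) composed with the merge-order reversal of Corollary~\ref{corollary:reversal}. Alternatively, once $\snaketwo{v}{u} = \left[\snaketwo{v}{u}\right]_{\uparrow}$ is in hand, I would close the remaining equality via the maximum-entropy uniqueness principle (Lemma~\ref{lemma:max_entropy_uniqueness}): both $\left[\snaketwo{v}{u}\right]_{\downarrow}$ and $\snaketwo{v}{u}$ are assembled from merges and are therefore consistent with all fundamental marginals along the column path $\mathcal{P}$, so their Markov entropy decompositions agree; verifying that $\left[\snaketwo{v}{u}\right]_{\downarrow}$ saturates $\entropy{\cdot} = \markoventropy{\cdot}{\mathcal{P}}$ (again through the conditional-independence relations above, as in Lemma~\ref{lemma:entropy_decomposition}) then forces the two states to coincide.
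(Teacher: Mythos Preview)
Your approach is genuinely different from the paper's, and the difference matters. You try to prove by induction that the flat diagram satisfies the same recursion $(\cdot) \rightmerge \nbym{2}{2}{u}$ as the level-$2$ snake. The whole weight of the inductive step rests on the ``reordering'' claim that the bottom vertex $u$ can be supplied at the end rather than pre-loaded into $\snakeone{v}{u}$. Concretely this means that the Petz map appending $u$ (a channel acting on the site $u-\unitx$) commutes with the $2\times 2$ upward merge at $u-\unitx$ (a channel also acting on $u-\unitx$). These two quantum channels share a subsystem and do not commute for free; the merging lemma gives you a \emph{merged} Markov chain, not a commutation identity between recovery maps, and the conditional-independence relations in $\mathcal{C}_M$ are conditions on the fundamental marginals, not on the constructed intermediate state to which you are applying both maps. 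You acknowledge this is ``precisely where the diagrammatic calculus does the work,'' but you never actually do the work: you do not name the specific element of $\mathcal{C}_M$ or show that it lifts to the state $\snakeone{v}{u-\unitx}\,[\,\text{$2\times2$ merges up to }u-2\unitx\,]$. That is the gap.

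The paper avoids this commutation problem entirely. It introduces an auxiliary \emph{hooked} diagram (with one extra vertex on the top-left) and proves (i) hooked $=$ flat (Lemma~\ref{lemma:leveltwo_hook_nohook}); (ii) tracing out the first column of the hooked diagram gives the shorter hooked diagram (Lemma~\ref{lemma:recursion_level_two}); (iii) its marginal on the leftmost $2\times 2$ is fundamental (Lemma~\ref{lemma:recursion_level_two_local}). Facts (ii)--(iii) feed Lemma~\ref{lemma:local_lemma} to show the hooked diagram is a \emph{column-wise} Markov chain, so its entropy equals the Markov entropy decomposition along columns. Since the level-$2$ snake has the same decomposition (Lemma~\ref{lemma:entropy_decomposition}) and the same $2\times 2$ marginals, maximum-entropy uniqueness (Lemma~\ref{lemma:max_entropy_uniqueness}) forces equality. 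So the paper never reorders right-merges at all; it compares both sides to the same Markov entropy bound. Your own ``alternative'' route for the $\downarrow$ identity is in fact the paper's primary tool --- you would be better served making it the main argument for $\uparrow$ as well.
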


To prove these claims, it will be easier to work with a \emph{hooked diagram} from $v$ to $u$, defined as 
\begin{equation}
    \left[\snaketwo{v}{u}\right]_{\uparrow, \text{hook}} := \snakeonehooked{v}{u} \left[ \mergeprod{i=1}{|u-v|} \nbym{2}{2}{v+i\unitx} \right],
\end{equation}
where
\begin{equation}
    \snakeonehooked{v}{u} := \snakeonehooked{v}{u-\unitx} \rightmerge
    \shearedTikZ{\mycustomgrid{2}{2}{u}; \node[regular] at (1,1) () {}; \node[regular] at (2,1) () {};}
\end{equation}
The $\pi$-rotated version of the hooked diagram is defined in an analogous way, replacing $\uparrow$ by $\downarrow$. Below, we focus on proving the first identity. The second identity follows from the same logic, by rotating all the diagrams involved in the ensuing argument by $\pi$.

Let us first remark that the hooked diagram is equal to the flat diagram.
\begin{lemma}
\begin{equation}
    \left[\snaketwo{v}{u}\right]_{\uparrow} = \left[\snaketwo{v}{u}\right]_{\uparrow, \text{hook}}. \label{eq:leveltwo_hook_nohook}
\end{equation}
\label{lemma:leveltwo_hook_nohook}
\end{lemma}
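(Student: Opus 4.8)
The plan is to isolate the single place where the flat and hooked diagrams differ — the leftmost square — and show that the hook is harmlessly absorbed there. Both $[\snaketwo{v}{u}]_{\uparrow}$ and $[\snaketwo{v}{u}]_{\uparrow,\text{hook}}$ right-merge the \emph{same} tail of squares $\nbym{2}{2}{v+\unitx}, \ldots, \nbym{2}{2}{u}$ onto a base object, and since the right-merge is a function of its two arguments, left-associativity of the merge product reduces the lemma to the single identity
\[
\snakeone{v}{u}\rightmerge\nbym{2}{2}{v+\unitx} = \snakeonehooked{v}{u}\rightmerge\nbym{2}{2}{v+\unitx}.
\]
Write $R$ for the bottom row carrying $\snakeone{v}{u}$, put $h=v+\unity$ (the hook) and $p=v+\unitx+\unity$, let $Q=\{v,v+\unitx,h,p\}$ be the support of the square, and let $\rho_L$ be the corner L-marginal of $\nbym{2}{2}{v+\unitx}$ on $\{v,v+\unitx,h\}$. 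The crucial structural fact is that $Q$ already contains $h$, so both merges yield a state on $R\cup\{h,p\}$; they differ only in whether $h$ is carried in by the base object or created by the square.

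I would prove the displayed identity by showing the flat side satisfies the three properties that, by Petz's theorem~\eqref{eq:petz_theorem}, uniquely characterize the hooked side. Set $\tau:=\snakeone{v}{u}\rightmerge\nbym{2}{2}{v+\unitx}$ and $A:=R\setminus\{v,v+\unitx\}$. By construction $\tau$ has marginal $\nbym{2}{2}{v+\unitx}$ on $Q$ and the Markov chain $I(A:\{h,p\}\mid\{v,v+\unitx\})_\tau=0$. Applying the monotonicity move $\mono$ to this chain yields simultaneously $I(A:p\mid\{v,v+\unitx,h\})_\tau=0$, which is exactly the Markov chain defining the hooked merge, and $I(A:h\mid\{v,v+\unitx\})_\tau=0$; the latter identifies the marginal $\tau_{R\cup\{h\}}=\snakeone{v}{u}\rightmerge\rho_L$. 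Granting the sub-claim $\snakeonehooked{v}{u}=\snakeone{v}{u}\rightmerge\rho_L$, this marginal is $\snakeonehooked{v}{u}$, so $\tau$ carries all three defining marginals and chains of $\snakeonehooked{v}{u}\rightmerge\nbym{2}{2}{v+\unitx}$, and uniqueness of the Petz recovery forces equality.

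It remains to establish the sub-claim $\snakeonehooked{v}{u}=\snakeone{v}{u}\rightmerge\rho_L$ — that the hook is conditionally independent of the far part of the row given the two leftmost columns — and this is where I expect the real work. I would induct on $|u-v|$: the base case $u=v+\unitx$ is immediate because $\snakeone{v}{v+\unitx}\rightmerge\rho_L=[\rho_L]_{\{v,v+\unitx\}}\rightmerge\rho_L=\rho_L=\snakeonehooked{v}{v+\unitx}$. For the step, unfold $\snakeonehooked{v}{u}=\snakeonehooked{v}{u-\unitx}\rightmerge\nbym{2}{1}{u}$, apply the inductive hypothesis, and reduce to showing that the left-end hook-merge (a Petz channel on $\{v,v+\unitx\}$ creating $h$) commutes with the right-end row-extension merge (a Petz channel on $\{u-\unitx\}$ creating $u$). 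When $u\ge v+3\unitx$ these two channels act on disjoint sets of sites and commute by the localized form of Lemma~\ref{lemma:local_lemma}; the short case $u=v+2\unitx$, where their conditioning sets meet at $v+\unitx$, has to be handled separately by the same marginal-and-conditional-independence bookkeeping, with the required conditional independence supplied by the Markov conditions in $\mathcal{C}_M$. Making this commutation watertight, especially across the edge case, is the main obstacle; the remainder is routine monotonicity and Petz-uniqueness accounting.
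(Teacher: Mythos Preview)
Your reduction to the single identity after the first $2\times2$ merge is valid. The gap is the step you call ``by construction'': from $\tau=\snakeone{v}{u}\rightmerge\nbym{2}{2}{v+\unitx}$ you assert both that $\tau_Q=\nbym{2}{2}{v+\unitx}$ and that $I(A:\{h,p\}\mid\{v,v+\unitx\})_\tau=0$. The first holds because the snake and the square agree on $\{v,v+\unitx\}$, but the second does not come for free. Applying the Petz channel $\Phi_{B\to BC}$ to an arbitrary $\sigma_{AB}$ with $\sigma_B=\rho_B$ does not in general produce a Markov chain, and equivalently there is no reason that $\text{Tr}_{h,p}\tau$ recovers the input snake. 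The merging lemma (Lemma~\ref{lemma:merging}) would supply this if you could split the two-site overlap $\{v,v+\unitx\}$ into $B$ and $C$ with Markov chains on \emph{both} sides, but neither split works: the snake supplies $I(A:v\mid v+\unitx)=0$, yet the square carries no matching condition $I(v+\unitx:\{h,p\}\mid v)=0$ in $\mathcal{C}_M$. Without the Markov structure on $\tau$ you cannot identify $\tau_{R\cup\{h\}}$ with $\snakeone{v}{u}\rightmerge\rho_L$, and the Petz-uniqueness comparison never gets started. Your inductive sub-claim about $\snakeonehooked{v}{u}$ is reasonable but cannot by itself fill this hole.

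The paper avoids merging onto the full-length snake altogether. It first commutes each $2\times2$ right-merge leftward past all subsequent $2\times1$ right-merges (disjoint supports, so this is legitimate), after which the flat and hooked sequences differ only in their first three terms. Those three terms live inside a single $3\times2$ window, and both versions are checked directly from $\mathcal{C}_M$ to equal the five-site fundamental marginal $\shearedTikZmini{\mycustomgrid{3}{2}{} \node[regular] at (1,1) () {}; \node[regular] at (2,1) () {}; \node[regular] at (3,1) () {}; \node[regular] at (1,2) () {}; \node[regular] at (2,2) () {};}$ anchored at $v+2\unitx$. This localizes the comparison to a finite verification involving only fundamental marginals, with no global Markov chain on $\tau$, no induction along the snake, and no Petz-uniqueness bookkeeping.
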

\begin{proof}
To derive this identity, it will be convenient to rearrange the two merging sequences associated with the left- and the right-hand-side of Eq.~\eqref{eq:leveltwo_hook_nohook}, by commuting through the right-merges of the $2\times 2$ fundamental marginals all the way to the left (until it becomes impossible to commute with anything that lies to its left). Then the two merging sequences differ only over the first three fundamental marginals, \emph{i.e.,}
\begin{equation}
    \begin{aligned}
    \left(\shearedTikZ{
    \mycustomgrid{3}{2}{v+2\unitx}
    \node[regular] at (1,1) () {};
    \node[regular] at (2,1) () {};
    }
    \rightmerge
    \shearedTikZ{
    \mycustomgrid{3}{2}{v+2\unitx}
    \node[regular] at (2,1) () {};
    \node[regular] at (3,1) () {};
    }
    \right) \rightmerge 
    \shearedTikZ{
    \mycustomgrid{3}{2}{v+2\unitx}
    \node[regular] at (1,1) () {};
    \node[regular] at (2,1) () {};
    \node[regular] at (1,2) () {};
    \node[regular] at (2,2) () {};
    } \,\,\, & \text{ for } \left[ \snaketwo{u}{v} \right]_{\uparrow} \,\, \text{and} \\
    \left(\shearedTikZ{
    \mycustomgrid{3}{2}{v+2\unitx}
    \node[regular] at (1,1) () {};
    \node[regular] at (2,1) () {};
    \node[regular] at (1,2) () {};
    }
    \rightmerge
    \shearedTikZ{
    \mycustomgrid{3}{2}{v+2\unitx}
    \node[regular] at (2,1) () {};
    \node[regular] at (3,1) () {};
    }
    \right) \rightmerge 
    \shearedTikZ{
    \mycustomgrid{3}{2}{v+2\unitx}
    \node[regular] at (1,1) () {};
    \node[regular] at (2,1) () {};
    \node[regular] at (1,2) () {};
    \node[regular] at (2,2) () {};
    } \,\,\, & \text{ for }  \left[ \snaketwo{u}{v} \right]_{\uparrow, \text{hook}}.
    \end{aligned}
    \label{eq:merging_hook_unhook_comparison_local}
\end{equation}

The two terms in Eq.~\eqref{eq:merging_hook_unhook_comparison_local} can be shown to be both equal to the fundamental marginal over $\shearedTikZmini{\mycustomgrid{3}{2}{} \node[regular] at (1,1) () {}; \node[regular] at (2,1) () {}; \node[regular] at (3,1) () {}; \node[regular] at (1,2) () {}; \node[regular] at (2,2) () {};}$ anchored at $v+2\unitx$. For the first term, this fact follows from
\begin{equation}
    \shearedTikZ{\mycustomgrid{3}{2}{v+2\unitx}
    \node[disk] at (1,1) () {};
    \node[square] at (2,1) () {};
    \node[triangle] at (3,1) () {};
    }
    \,\,\,
    \text{and }
    \,\,\,
    \shearedTikZ{\mycustomgrid{3}{2}{v+2\unitx}
    \node[disk] at (1,2) () {};
    \node[disk] at (2,2) () {};
    \node[square] at (1,1) () {};
    \node[square] at (2,1) () {};
    \node[triangle] at (3,1) () {};
    },
\end{equation}
each following from Eq.~\eqref{eq:littlesnake1} and the $\pi$-rotated version of Eq.~\eqref{eq:rev1}, \emph{i.e.,}
\begin{equation}
    \shearedTikZ{\mycustomgrid{3}{2}{v+2\unitx}
    \node[disk] at (1,2) () {};
    \node[disk] at (2,2) () {};
    \node[disk] at (1,1) () {};
    \node[square] at (2,1) () {};
    \node[triangle] at (3,1) () {};
    }
    \mono
    \shearedTikZ{\mycustomgrid{3}{2}{v+2\unitx}
    \node[disk] at (1,2) () {};
    \node[disk] at (2,2) () {};
    \node[square] at (1,1) () {};
    \node[square] at (2,1) () {};
    \node[triangle] at (3,1) () {};
    },
\end{equation}
respectively. The second term in Eq.~\eqref{eq:merging_hook_unhook_comparison_local} follows from 
\begin{equation}
   \shearedTikZ{\mycustomgrid{3}{2}{v+2\unitx}
    \node[disk] at (1,2) () {};
    \node[disk] at (2,2) () {};
    \node[disk] at (1,1) () {};
    \node[square] at (2,1) () {};
    \node[triangle] at (3,1) () {};
    }
    \mono
    \shearedTikZ{\mycustomgrid{3}{2}{v+2\unitx}
    \node[disk] at (1,2) () {};
    \node[disk] at (1,1) () {};
    \node[square] at (2,1) () {};
    \node[triangle] at (3,1) () {};
    }, 
    \shearedTikZ{\mycustomgrid{3}{2}{v+2\unitx}
    \node[square] at (1,2) () {};
    \node[disk] at (2,2) () {};
    \node[square] at (1,1) () {};
    \node[square] at (2,1) () {};
    \node[triangle] at (3,1) () {};}.
    \label{eq:eq01}
\end{equation}
Thus, Eq.~\eqref{eq:leveltwo_hook_nohook} is proven. 
\end{proof}

Now we are in a position to prove Proposition~\ref{prop:one_to_two}. By Lemma~\ref{lemma:leveltwo_hook_nohook}, it suffices to prove
\begin{equation}
    \snaketwo{v}{u} =  \left[\snaketwo{v}{u} \right]_{\uparrow, \text{hook}}.\label{eq:level_two_intermediate_goal}
\end{equation}
Here is a high-level overview of the proof strategy behind Eq.~\eqref{eq:level_two_intermediate_goal}. The basic idea is to use the maximum-entropy principle. Specifically, we shall show that the hooked level-$2$ snake and the level-$2$ snake have the same marginals over every $2\times 2$ clusters strictly contained in their supports; moreover they have the same global entropy, which is equal to the Markov entropy decomposition. By Lemma~\ref{lemma:max_entropy_uniqueness}, the two states are the same. 

There are two useful facts.
\begin{lemma}
\begin{equation}
\text{Tr}_{v, v+\unity} \left( \left[\snaketwo{v}{u} \right]_{\uparrow, \text{hook}}\right) = \left[\snaketwo{v+\unitx}{u} \right]_{\uparrow, \text{hook}}
\end{equation}
\label{lemma:recursion_level_two}
\end{lemma}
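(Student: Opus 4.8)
The plan is to prove the trace-out identity directly from the recursive/merge-product structure of the hooked diagram, peeling off the leftmost column. Recall that
\begin{equation*}
  \left[\snaketwo{v}{u}\right]_{\uparrow,\text{hook}} = \snakeonehooked{v}{u}\left[\mergeprod{i=1}{|u-v|}\nbym{2}{2}{v+i\unitx}\right],
\end{equation*}
so the two sites $v$ and $v+\unity$ that we wish to trace out are exactly the leftmost column of the support. The key structural observation is that tracing out the leftmost column should commute with all of the right-merges \emph{except} the ones whose support actually touches that column, because a right-merge $\sigma \rightmerge \rho$ leaves $\sigma$'s reduced density matrix on the part of its support disjoint from $\rho$ untouched (this is Lemma~\ref{lemma:local_lemma} read as a channel acting only on the overlap region). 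So the first step is to identify precisely which factors in the merging sequence are supported on $\{v, v+\unity\}$: the hook (which lives at the bottom-left and supplies the $v$ site) and the very first $2\times2$ marginal $\nbym{2}{2}{v+\unitx}$, which covers the column at $v$ together with the column at $v+\unitx$.

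**Next** I would push the trace through the merge product from the right, using that $\text{Tr}_{v,v+\unity}$ acts trivially on every factor anchored at $v+i\unitx$ for $i\geq 2$, leaving those factors intact. This reduces the problem to a purely local computation: I must show that tracing $\{v, v+\unity\}$ out of the left end — the hook merged with the first $2\times2$ block — reproduces the hook and first block of $\left[\snaketwo{v+\unitx}{u}\right]_{\uparrow,\text{hook}}$, whose leftmost column now sits at $v+\unitx$. Concretely, the hook $\snakeonehooked{v}{u-\unitx}$ merged against $\shearedTikZmini{\mycustomgrid{2}{2}{} \node[regular] at (1,1) () {}; \node[regular] at (2,1) () {};}$ at $u$, together with the first $2\times2$ fundamental marginal, should, upon deleting the $v$-column, collapse to the shifted hooked snake. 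I would verify this by invoking the Markov/Petz structure: the conditional-independence relations in $\mathcal{C}_M$ (specifically the level-$1$ diagram of Eq.~\eqref{eq:littlesnake1} and its rotation) guarantee that the state on the left three columns is the right-merge of the appropriate smaller pieces, and tracing out the boundary column is exactly the inverse (monotonicity) move that discards the conditioned-away circle.

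**The main obstacle**, and where I would spend the most care, is the bookkeeping at the left boundary: the hook contributes an extra site below-left, so the interaction between ``hook supplies site $v$'' and ``first $2\times2$ block covers the $v$-column'' is not symmetric with the bulk, and one must check that after the trace the remaining left end genuinely matches the \emph{hooked} (not the flat) snake anchored at $v+\unitx$. I expect this to reduce to a single diagrammatic identity of the same flavor as Lemma~\ref{lemma:leveltwo_hook_nohook} — i.e., a Markov condition asserting that tracing the leftmost column out of a hook-plus-$2\times2$ configuration is conditionally independent in the right way — provable by one monotonicity move followed by the splitting property (Lemma~\ref{lemma:splitting}) to re-associate the merges. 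Once that local identity is established, combining it with the commutation of the trace through the untouched bulk factors yields the claim immediately.
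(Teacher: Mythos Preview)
Your overall instinct---commute the partial trace through every right-merge that does not touch $\{v,v+\unity\}$, then finish with a boundary computation---is the right one, and the commutation step for the $2\times 2$ merges with $i\geq 2$ is indeed valid. The gap is in what you call the ``purely local computation.'' After peeling off those merges you are left with
\[
\text{Tr}_{v,\,v+\unity}\Bigl(\snakeonehooked{v}{u}\;\rightmerge\;\nbym{2}{2}{v+\unitx}\Bigr),
\]
and this object is \emph{not} local: the hooked level-$1$ snake still stretches all the way from $v$ to $u$. To identify this with $\snakeonehooked{v+\unitx}{u}$ you would need a global Markov property of the full-width hooked snake (column $1$ conditionally independent of columns $3,\dots$ given column $2$, together with the correct marginal on columns $2,\dots$). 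Your proposed fix---one monotonicity move plus Lemma~\ref{lemma:splitting}---does not get you there: the splitting lemma is only stated for the plain level-$1$/$2$/$3$ snakes, not for the hooked variant or for the hybrid object above, so the ``re-association'' you invoke is not available as written. (Minor geometric slip: the hook sits at $v+\unity$, the top-left site, not the bottom-left; the site $v$ is already part of the level-$1$ base.)

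The paper sidesteps this entirely by applying the rearrangement of Lemma~\ref{lemma:leveltwo_hook_nohook} \emph{before} tracing, not after. That rearrangement interleaves the $2\times 1$ and $2\times 2$ merges column by column, so that the very first chunk of the merge product is the bounded fundamental marginal $\shearedTikZmini{\mycustomgrid{3}{2}{} \node[regular] at (1,1) () {}; \node[regular] at (2,1) () {}; \node[regular] at (3,1) () {}; \node[regular] at (1,2) () {}; \node[regular] at (2,2) () {};}$ anchored at $v+2\unitx$ (established in Eq.~\eqref{eq:eq01}). Now the trace over $\{v,v+\unity\}$ really is local: it just takes a partial trace of a single fundamental marginal, yielding the three-vertex hook $\shearedTikZmini{\mycustomgrid{2}{2}{} \node[regular] at (1,1) () {}; \node[regular] at (2,1) () {}; \node[regular] at (1,2) () {};}$ anchored at $v+2\unitx$, which is exactly the first factor of the rearranged sequence for $\left[\snaketwo{v+\unitx}{u}\right]_{\uparrow,\text{hook}}$. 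The remaining merges match term by term, and the lemma follows in one line. The missing idea in your plan is to do this interleaving first; without it, your boundary step is as hard as the lemma itself.
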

\begin{proof}
After using the rearrangement of the right-merges discussed in the proof of Lemma~\ref{lemma:leveltwo_hook_nohook}, Lemma~\ref{lemma:recursion_level_two} follows immediately from 
\begin{equation}
    \shearedTikZ{
    \mycustomgrid{3}{2}{v+2\unitx}
    \node[regular] at (1,1) () {};
    \node[regular] at (2,1) () {};
    \node[regular] at (3,1) () {};
    \node[regular] at (1,2) () {};
    \node[regular] at (2,2) () {};
    } = 
    \left(
    \shearedTikZ{
    \mycustomgrid{3}{2}{v+2\unitx}
    \node[regular] at (1,1) () {};
    \node[regular] at (2,1) () {};
    \node[regular] at (1,2) () {};
    } \rightmerge 
    \shearedTikZ{
    \mycustomgrid{3}{2}{v+2\unitx}
    \node[regular] at (2,1) () {};
    \node[regular] at (3,1) () {};
    }
    \right)
    \rightmerge
    \shearedTikZ{
    \mycustomgrid{3}{2}{v+2\unitx}
    \node[regular] at (1,1) () {};
    \node[regular] at (2,1) () {};
    \node[regular] at (1,2) () {};
    \node[regular] at (2,2) () {};
    },
\end{equation}
a fact already proved in Eq.~\eqref{eq:eq01}.
\end{proof}
\begin{lemma}
\begin{equation}
\text{Tr}_{t: t_x>v_x+1} \left( \left[\snaketwo{v}{u} \right]_{\uparrow, \text{hook}}\right) = \nbym{2}{2}{v+\unitx}
\end{equation}
\label{lemma:recursion_level_two_local}
\end{lemma}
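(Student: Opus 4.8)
The plan is to read the claim off the merge-product structure of the hooked diagram, using two elementary marginal identities for the right-merge that follow directly from its Petz form in Eq.~\eqref{eq:petz_theorem}: tracing out the merged-in factor recovers the old state, $\text{Tr}_C(\sigma_{AB}\rightmerge\rho_{BC}) = \sigma_{AB}$, while tracing out the old factor recovers the merged-in marginal, $\text{Tr}_A(\sigma_{AB}\rightmerge\rho_{BC}) = \rho_{BC}$, the latter holding whenever $\sigma_B = \rho_B$. To expose this structure, I would introduce the partial merge products $\rho^{(k)} := \snakeonehooked{v}{u}\left[\mergeprod{i=1}{k}\nbym{2}{2}{v+i\unitx}\right]$, so that $\rho^{(|u-v|)} = \left[\snaketwo{v}{u}\right]_{\uparrow,\text{hook}}$ and $\rho^{(0)}=\snakeonehooked{v}{u}$. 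The key geometric observation is that the $k$-th right-merge $\rho^{(k)} = \rho^{(k-1)} \rightmerge \nbym{2}{2}{v+k\unitx}$ enlarges the support by exactly one vertex, the top vertex $v+k\unitx+\unity$ of the $k$-th column (its other three sites already lie in $\rho^{(k-1)}$, whose support is the full bottom row together with the top vertices of columns $0,\ldots,k-1$). For $k\geq 2$ this new vertex has $x$-coordinate $v_x+k>v_x+1$.

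First I would peel the merges off from the right. Viewing the $k$-th merge as $\sigma_{AB}\rightmerge\rho_{BC}$ with $\sigma=\rho^{(k-1)}$, $\rho=\nbym{2}{2}{v+k\unitx}$, and $C=\{v+k\unitx+\unity\}$, the recovery identity gives $\text{Tr}_{v+k\unitx+\unity}(\rho^{(k)}) = \rho^{(k-1)}$. Since $\text{supp}(\rho^{(k)}) = \text{supp}(\rho^{(k-1)}) \cup \{v+k\unitx+\unity\}$ and the adjoined vertex has $t_x>v_x+1$ for $k\geq 2$, the partial traces commute and the set $\{t: t_x>v_x+1\}$ of $\rho^{(k)}$ is that of $\rho^{(k-1)}$ plus this one vertex; tracing it first yields $\text{Tr}_{t: t_x>v_x+1}(\rho^{(k)}) = \text{Tr}_{t: t_x>v_x+1}(\rho^{(k-1)})$. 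Iterating from $k=|u-v|$ down to $k=1$ collapses everything to $\text{Tr}_{t: t_x>v_x+1}\!\left(\left[\snaketwo{v}{u}\right]_{\uparrow,\text{hook}}\right) = \text{Tr}_{t: t_x>v_x+1}(\rho^{(1)})$.

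Next I would evaluate the leftover. Here $\rho^{(1)} = \snakeonehooked{v}{u} \rightmerge \nbym{2}{2}{v+\unitx}$ is a right-merge $\sigma_{AB}\rightmerge\rho_{BC}$ in which the merged-in marginal is the fundamental marginal $\rho_{BC}=\nbym{2}{2}{v+\unitx}$ supported on the leftmost $2\times 2$ block $BC=\{v, v+\unitx, v+\unity, v+\unitx+\unity\}$, with overlap $B=\{v, v+\unitx, v+\unity\}$ and remainder $A=\{v+2\unitx,\ldots,u\}$ the bottom-row sites to the right. These are precisely the vertices of $\rho^{(1)}$ with $t_x>v_x+1$, so $\text{Tr}_{t: t_x>v_x+1}(\rho^{(1)}) = \text{Tr}_A(\sigma_{AB}\rightmerge\rho_{BC})$. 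Invoking the second identity $\text{Tr}_A(\sigma_{AB}\rightmerge\rho_{BC}) = \rho_{BC}$ (valid because $\sigma_B=\rho_B$) then gives exactly $\nbym{2}{2}{v+\unitx}$, which is the assertion.

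The main obstacle is the local-consistency hypothesis $\sigma_B=\rho_B$ required in the final step: one must verify that the hooked level-$1$ snake $\snakeonehooked{v}{u}$ and the fundamental marginal $\nbym{2}{2}{v+\unitx}$ agree on the overlap $B=\{v, v+\unitx, v+\unity\}$. This holds because $\snakeonehooked{v}{u}$ is itself assembled by right-merging locally consistent fundamental marginals, so its reduced density matrix on any bounded region contained in its support coincides with the corresponding fundamental marginal; in particular its $B$-marginal is the fundamental marginal on $B$, which is also the $B$-marginal of $\nbym{2}{2}{v+\unitx}$ by local consistency of the fundamental marginals. The remaining care is purely combinatorial bookkeeping — confirming from the anchoring conventions that each merge adds a single new vertex and that all such vertices for $k\geq 2$ sit strictly right of column $v_x+1$ — which is routine.
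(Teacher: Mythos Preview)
Your argument rests on the ``peeling'' identity $\text{Tr}_C(\sigma_{AB}\rightmerge\rho_{BC}) = \sigma_{AB}$, which you claim follows directly from the Petz form in Eq.~\eqref{eq:petz_theorem}. It does not. The Petz theorem says that \emph{for a single state} $\rho_{ABC}$ with $I(A:C|B)_\rho=0$ one has $\rho_{ABC}=\rho_{AB}\rightmerge\rho_{BC}$, hence $\text{Tr}_C(\rho_{AB}\rightmerge\rho_{BC})=\rho_{AB}$. It says nothing about an arbitrary $\sigma_{AB}$ fed into the Petz map built from a different $\rho_{BC}$; equivalently, $\text{Tr}_C\circ\Phi^{\rho}_{B\to BC}$ is in general \emph{not} the identity channel on $B$. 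A quick counterexample: take $A$ trivial, $\rho_{BC}=p|00\rangle\langle 00|+(1-p)|11\rangle\langle 11|$, and $\sigma_B=|+\rangle\langle +|$; then $\sigma_B\rightmerge\rho_{BC}=|\Phi^+\rangle\langle\Phi^+|$ and $\text{Tr}_C|\Phi^+\rangle\langle\Phi^+|=I/2\neq\sigma_B$. Your second identity $\text{Tr}_A(\sigma_{AB}\rightmerge\rho_{BC})=\rho_{BC}$ under $\sigma_B=\rho_B$ is fine (it follows by direct computation), but the first one is the load-bearing step and it fails.

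In your setting this means that $\text{Tr}_{v+k\unitx+\unity}(\rho^{(k)})=\rho^{(k-1)}$ is not automatic; it would require knowing that $\rho^{(k)}$ is a quantum Markov chain between $A$ and $C=\{v+k\unitx+\unity\}$ conditioned on $B$ \emph{and} that its $AB$-marginal equals $\rho^{(k-1)}$ --- precisely the kind of statement the lemma is working toward. The paper accordingly does \emph{not} peel from the right. Instead it uses the rearrangement from the proof of Lemma~\ref{lemma:leveltwo_hook_nohook} to localize the computation to a four-column window, and then invokes the merging lemma (Lemma~\ref{lemma:merging}) together with specific conditional-independence relations drawn from $\mathcal{C}_M$ (via monotonicity) to justify each partial trace. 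Those Markov conditions are what license the marginal-preservation you are assuming for free. Your final paragraph also leans on the unproved claim that the hooked level-$1$ snake agrees with the fundamental marginal on $\{v,v+\unitx,v+\unity\}$; this is again a marginal-preservation-under-merge statement that needs a Markov-chain justification, not just local consistency of the inputs.
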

\begin{proof}
Note that the partial trace of the hooked level-$2$ snake is equal to the reduced density matrix of $\left(\shearedTikZmini{\mycustomgrid{4}{2}{} \node[regular] at (1,1) () {}; \node[regular] at (2,1) () {}; \node[regular] at (3,1) () {}; \node[regular] at (1,2) () {}; \node[regular] at (2,2) () {};} \rightmerge \shearedTikZmini{\mycustomgrid{4}{2}{} \node[regular] at (3,1) () {}; \node[regular] at (4,1) () {};}\right) \rightmerge \shearedTikZmini{\mycustomgrid{4}{2}{}  \node[regular] at (2,1) () {}; \node[regular] at (3,1) () {}; \node[regular] at (2,2) () {}; \node[regular] at (3,2) () {};}$ over the left-most $2\times 2$ cluster. (The bottom-right corners of these diagrams are all $v+3\unitx$.) Tracing out the right-most vertex yields $\shearedTikZmini{\mycustomgrid{4}{2}{} \node[regular] at (1,1) () {}; \node[regular] at (2,1) () {}; \node[regular] at (3,1) () {}; \node[regular] at (1,2) () {}; \node[regular] at (2,2) () {};}$ because one can merge (using Lemma~\ref{lemma:merging}) $\shearedTikZmini{\mycustomgrid{4}{2}{} \node[regular] at (1,1) () {}; \node[regular] at (2,1) () {}; \node[regular] at (3,1) () {}; \node[regular] at (1,2) () {}; \node[regular] at (2,2) () {};}$ and $\shearedTikZmini{\mycustomgrid{4}{2}{}  \node[regular] at (2,1) () {}; \node[regular] at (3,1) () {};  \node[regular] at (2,2) () {}; \node[regular] at (4,1) () {};}$; note that the two marginals are locally consistent and satisfy the following conditional independence relations:
\begin{equation}
    \begin{aligned}
    \shearedTikZ{
    \mygrid{}    
    \node[disk] at (0,2) () {};
    \node[disk] at (1,2) () {};
    \node[disk] at (0,1) () {};
    \node[disk] at (0,0) () {};
    \node[square] at (1,0) () {};
    \node[square] at (1,1) () {};
    \node[triangle] at (2,0) () {};
    \node[triangle] at (2,1) () {};
    } &\mono
    \shearedTikZ{
    \mygrid{}    
    \node[disk] at (0,1) () {};
    \node[disk] at (0,0) () {};
    \node[square] at (1,0) () {};
    \node[square] at (1,1) () {};
    \node[triangle] at (2,0) () {};
    }, \\
    \shearedTikZ{
    \mygrid{}
    \node[square] at (1,1) () {};
    \node[disk] at (2,1) () {};
    \node[triangle] at (0,1) () {};
    \node[triangle] at (0,2)  () {};
    \node[triangle] at (1,2) () {};
    }
    &\mono
    \shearedTikZ{
    \mygrid{}
    \node[square] at (1,1) () {};
    \node[triangle] at (0,1) () {};
    \node[triangle] at (0,2)  () {};
    \node[disk] at (2,1) () {};
    }.
    \end{aligned}
\end{equation}
(The second line follows from an argument identical to the derivation of Eq.~\eqref{eq:rev1}, up to a rotation by $\pi$.) Thus, the reduced density matrix of $\left(\shearedTikZmini{\mycustomgrid{4}{2}{} \node[regular] at (1,1) () {}; \node[regular] at (2,1) () {}; \node[regular] at (3,1) () {}; \node[regular] at (1,2) () {}; \node[regular] at (2,2) () {};} \rightmerge \shearedTikZmini{\mycustomgrid{4}{2}{} \node[regular] at (3,1) () {}; \node[regular] at (4,1) () {};}\right) \rightmerge \shearedTikZmini{\mycustomgrid{4}{2}{}  \node[regular] at (2,1) () {}; \node[regular] at (3,1) () {}; \node[regular] at (2,2) () {}; \node[regular] at (3,2) () {};}$ over the left-most $2\times 2$ cluster is equal to that of $\shearedTikZmini{\mycustomgrid{4}{2}{} \node[regular] at (1,1) () {}; \node[regular] at (2,1) () {}; \node[regular] at (3,1) () {}; \node[regular] at (1,2) () {}; \node[regular] at (2,2) () {};} \rightmerge \shearedTikZmini{\mycustomgrid{4}{2}{}  \node[regular] at (2,1) () {}; \node[regular] at (3,1) () {}; \node[regular] at (2,2) () {}; \node[regular] at (3,2) () {};}$. Using a conditional independence relation that follows from monotonicity, one can show that $\shearedTikZmini{\mycustomgrid{4}{2}{} \node[regular] at (1,1) () {}; \node[regular] at (2,1) () {}; \node[regular] at (3,1) () {}; \node[regular] at (1,2) () {}; \node[regular] at (2,2) () {};} \rightmerge \shearedTikZmini{\mycustomgrid{4}{2}{}  \node[regular] at (2,1) () {}; \node[regular] at (3,1) () {}; \node[regular] at (2,2) () {}; \node[regular] at (3,2) () {};}$ is equal to $\shearedTikZmini{\mycustomgrid{4}{2}{} \node[regular] at (1,1) () {}; \node[regular] at (2,1) () {}; \node[regular] at (3,1) () {}; \node[regular] at (1,2) () {}; \node[regular] at (2,2) () {}; \node[regular] at (3,2) () {};}$. Thus, the reduced density matrix of a level $2$-snake is consistent with the fundamental marginal on the left-most $2\times 2$ cluster. This completes the proof of Lemma~\ref{lemma:recursion_level_two_local}.
\end{proof}

Now we are in a position to prove Proposition~\ref{prop:one_to_two}. Let us restate the proposition for the reader's convenience.
\propone*
\begin{proof}
Note that 
\begin{equation}
 \left[ \snaketwo{v}{u} \right]_{\uparrow, \text{hook}}    =  \Phi_{v,u}^{(2)}\left(\nbym{2}{2}{v+\unitx}\right)
\end{equation}
for some quantum channel $\Phi^{(2)}_{v, u}$ from the second column to itself and the columns that extend to the $u$ and a vertex above it. Moreover, taking a partial trace over all but the first two columns, we again obtain $\shearedTikZmini{\mycustomgrid{2}{2}{}; \node[regular] at (1,1) () {}; \node[regular] at (2,1) () {}; \node[regular] at (1,2) () {}; \node[regular] at (2,2) () {};}$ anchored at $v+\unitx$; see Lemma~\ref{lemma:recursion_level_two_local}. Thus, we can apply Lemma~\ref{lemma:local_lemma} to conclude that the hooked diagram forms a Markov chain. (Specifically, we can conclude that $I(A:C|B)_{\rho}=0$ for the hooked-snake $\rho$, with $A$ being the first column, $B$ being the second column, and $C$ being the rest.) Using the entropy decomposition in Lemma~\ref{lemma:entropy_decomposition}, we get
\begin{equation}
    \entropy{\left[ \snaketwo{v}{u} \right]_{\uparrow, \text{hook}}} = \entropy{\left[ \snaketwo{v+\unitx}{u} \right]_{\uparrow, \text{hook}}} + \entropy{\nbym{2}{2}{v+\unitx}} - \entropy{\nbym{1}{2}{v+\unitx}}.
\end{equation}
Iterating this argument, we find that the entropy of the hooked diagram is equal to its Markov entropy decomposition whose underlying path is a sequence of columns from $v_x$ to $u_x$. Recall that the entropy of the level-$2$ snake is identical to this Markov entropy decomposition; see Lemma~\ref{lemma:entropy_decomposition}. Since both states are consistent on the marginals that appear in the Markov entropy decomposition and both are maximum-entropy states, by Lemma~\ref{lemma:max_entropy_uniqueness}
\begin{equation}
    \snaketwo{u}{v} = \left[ \snaketwo{u}{v} \right]_{\uparrow, \text{hook}}.
\end{equation}
By the equivalence of the hooked diagram and the flat diagram, we conclude that
\begin{equation}
    \snaketwo{u}{v} = \left[ \snaketwo{u}{v} \right]_{\uparrow}.
\end{equation}
The proof of the $\pi$-rotated version follows the exact same argument.
\end{proof}

\subsection{Level-$2$ $\to$ Level-$1$}
\label{sec:two_to_one}
In this Section, we prove the ``reverse'' of Proposition~\ref{prop:one_to_two}.
\begin{restatable}[]{proposition}{proptwo}
\begin{equation}
\begin{aligned}
    \snakeone{v}{u} &= \text{Tr}_{\substack{\{t\}: t_y=v_y+1}}\left( \snaketwo{v}{u} \right) \\
    &= \text{Tr}_{\substack{\{t\}: t_y=v_y-1}}\left( \snaketwo{v-\unity}{u-\unity}. \right)
\end{aligned}
    \label{eq:two_to_one}
\end{equation}
\label{prop:two_to_one}
\end{restatable}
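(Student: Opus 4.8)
The plan is to leverage Proposition~\ref{prop:one_to_two}, which already rewrites the level-$2$ snake as the \emph{bottom-row} level-$1$ snake with a sequence of $2\times2$ fundamental marginals merged onto it from above. Using the first identity $\snaketwo{v}{u} = \left[\snaketwo{v}{u}\right]_{\uparrow}$ together with the merge-product form of $\left[\,\cdot\,\right]_{\uparrow}$, we have
\begin{equation}
\snaketwo{v}{u} = \snakeone{v}{u}\left[\mergeprod{i=1}{|u-v|}\nbym{2}{2}{v+i\unitx}\right].
\end{equation}
Thus the level-$2$ snake is produced from $\snakeone{v}{u}$ by a left-associated sequence of right-merges, each of which introduces \emph{only} vertices in the top row $t_y = v_y+1$: the first merge (with $\nbym{2}{2}{v+\unitx}$) adjoins the two top vertices $v+\unity$ and $v+\unitx+\unity$, and the $k$-th merge for $k\geq 2$ adjoins the single top vertex $v+k\unitx+\unity$. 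The union of all adjoined vertices is exactly the top row, so the task is to show that tracing out the top row undoes this sequence of merges and returns $\snakeone{v}{u}$. The second identity in Eq.~\eqref{eq:two_to_one} is the same statement applied to $\snaketwo{v-\unity}{u-\unity}$, whose $\left[\,\cdot\,\right]_{\downarrow}$ representation is built from the \emph{top-row} snake $\snakeone{v}{u}$ by merging $2\times2$ marginals downward; the argument below applies verbatim after a rotation by $\pi$.

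The core is a peeling argument performed in reverse merge order. Let $\rho_k$ denote the state obtained from $\snakeone{v}{u}$ after the first $k$ of these right-merges, so that $\rho_0 = \snakeone{v}{u}$, $\rho_{|u-v|} = \snaketwo{v}{u}$, and $\rho_k = \rho_{k-1}\rightmerge\nbym{2}{2}{v+k\unitx}$. Each such step is an instance of the merging lemma (Lemma~\ref{lemma:merging}): $\rho_{k-1}$ and the block $\nbym{2}{2}{v+k\unitx}$ are locally consistent on their shared column(s), and the conditional-independence relations required by the lemma hold because all the objects involved are genuine snakes (Section~\ref{sec:snakes} together with Proposition~\ref{prop:one_to_two}). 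The merging lemma then guarantees $\rho_k \consistent \rho_{k-1}$, i.e.\ tracing out the vertices adjoined at step $k$ recovers $\rho_{k-1}$. Tracing the top-row vertices out from the outermost merge inward---first $v+|u-v|\unitx+\unity$, then $v+(|u-v|-1)\unitx+\unity$, and so on down to the pair $\{v+\unity,\ v+\unitx+\unity\}$---therefore peels off one right-merge at a time and yields $\rho_{|u-v|}\to\rho_{|u-v|-1}\to\cdots\to\rho_0 = \snakeone{v}{u}$, which is the desired identity.

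The point requiring the most care is that the recovery $\mathrm{Tr}_{C}\!\left(\sigma_{AB}\rightmerge\rho_{BC}\right) = \sigma_{AB}$ is \emph{not} unconditional: it holds only when $\sigma_{AB}\consistent\rho_{BC}$ on the overlap and the merged state is a quantum Markov chain, which is exactly the content of Lemma~\ref{lemma:merging} and Lemma~\ref{lemma:local_lemma}. This forces the order of the partial traces: the later merges condition on top-row vertices that were introduced by earlier merges (the $k$-th merge uses $v+(k-1)\unitx+\unity$ as part of its overlap), so a prematurely traced inner vertex would not commute past the outer Petz maps. Consequently the only genuine work is to verify, at each peeling step, the local-consistency and conditional-independence hypotheses of the merging lemma---this is where the Markov conditions $\mathcal{C}_M$ enter, through the snake structure established in Section~\ref{sec:snakes}---while the remainder is bookkeeping of which vertices each merge adds.
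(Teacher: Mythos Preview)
Your strategy—use Proposition~\ref{prop:one_to_two} to write the level-$2$ snake as the bottom-row level-$1$ snake with $2\times 2$ blocks merged on top, then peel those blocks off via the merging lemma—is appealing, but the step ``each such merge is an instance of Lemma~\ref{lemma:merging}'' is where it breaks. For the merging lemma to apply to $\rho_{k-1}\rightmerge\nbym{2}{2}{v+k\unitx}$ you must exhibit a split $B\cup C$ of the overlap with $I(A:C|B)_{\rho_{k-1}}=0$ and $I(B:D|C)=0$ on the $2\times 2$ block, where $D$ is the set of newly adjoined top-row vertices. Neither condition is supplied. The intermediate states $\rho_{k-1}$ (a full bottom row with a partial top row) are not snakes in any sense established in Section~\ref{sec:snakes}, so the appeal to ``the objects involved are genuine snakes'' does not cover them. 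More seriously, the $2\times 2$ fundamental marginal does not carry a conditional-independence relation of the required shape from $\mathcal{C}_M$: at $k=1$ you would need $I(\{v+\unitx\}:\{v+\unity,v+\unitx+\unity\}\mid\{v\})=0$, and for $k\ge 2$ the only geometrically sensible split forces $I(\text{bottom-left}:\text{top-right}\mid\{\text{bottom-right},\text{top-left}\})=0$; neither follows from Eq.~\eqref{eq:markov_conditions} or the relations derived in Section~\ref{sec:snakes}. Without these hypotheses the identity $\mathrm{Tr}_D(\rho_{k-1}\rightmerge\sigma)=\rho_{k-1}$ simply need not hold—the Petz map followed by partial trace over the extension is not the identity in general.

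The paper sidesteps this obstruction by working directly from the merge-product \emph{definition} of the level-$2$ snake (not its $[\,\cdot\,]_\uparrow$ form) and tracing out one corner vertex at a time. After removing the left-most vertex of the traced-out row, the resulting five-site marginal of the leading $3\times 2$ block is rewritten, via Eq.~\eqref{eq:rev1}, as a $2\times 2$ block right-merged with a $2\times 1$ piece; the $2\times 1$ piece has support disjoint from the next $2\times 2$ block in the product and can be commuted to the end. Iterating yields a reversed level-$1$ snake, and Corollary~\ref{corollary:reversal} finishes. The essential point is that the paper never needs a conditional-independence identity living entirely inside a $2\times 2$ block: the work is done at the $3\times 2$ scale, where $\mathcal{C}_M$ does supply what is required.
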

\begin{proof}
Below, we prove the equivalence of the level-$1$ snake to the second line of Eq.~\eqref{eq:two_to_one}. The equivalence of the level-$1$ snake to the first line can be proved in an analogous way. The very first right-merge for the level-$2$ snake results in $\shearedTikZmini{\mycustomgrid{3}{2}{}; \node[regular] at (1,1) () {}; \node[regular] at (2,1) () {}; \node[regular] at (3,1) () {}; \node[regular] at (1,2) () {}; \node[regular] at (2,2) () {}; \node[regular] at (3,2) () {};}$ anchored at $v-\unity + \unitx$. Upon tracing out the bottom-left corner, we obtain
\begin{equation}
    \shearedTikZ{
    \mycustomgrid{3}{2}{t}
    \node[regular] at (1,2) () {};
    \node[regular] at (2,2) () {};
    \node[regular] at (3,2) () {};
    \node[regular] at (3,1) () {};
    \node[regular] at (2,1) () {};
    } = 
    \shearedTikZ{
    \mycustomgrid{3}{2}{t}
    \node[regular] at (2,1) () {};
    \node[regular] at (2,2) () {};
    \node[regular] at (3,1) () {};
    \node[regular] at (3,2) () {};
    }
    \rightmerge
    \shearedTikZ{
    \mycustomgrid{3}{2}{t}
    \node[regular] at (1,2) () {};
    \node[regular] at (2,2) () {};
    },
\end{equation}
for $t= v-\unity + \unity$, wherein the equality follows from Eq.~\eqref{eq:rev1}. Note that the support of $\shearedTikZmini{
    \mycustomgrid{3}{2}{}
    \node[regular] at (1,2) () {};
    \node[regular] at (2,2) () {};
    }$ (again anchored at $v-\unity + \unity$) is disjoint from the support of $\shearedTikZmini{\mycustomgrid{2}{2}{}; \node[regular] at (1,1) () {}; \node[regular] at (2,1) () {}; \node[regular] at (1,2) () {}; \node[regular] at (2,2) () {};}$ associated with the next merging operation; therefore, these two can be exchanged. Repeating this argument, we can push the right-merge of $\shearedTikZmini{
    \mycustomgrid{3}{2}{}
    \node[regular] at (1,2) () {};
    \node[regular] at (2,2) () {};
    }$ all the way to the end, obtaining
\begin{equation}
    \text{Tr}_{v}\left( \snaketwo{v}{u} \right) = \snaketwo{v+\unitx}{u} \rightmerge 
    \shearedTikZ{
    \mycustomgrid{2}{2}{v+\unitx}
    \node[regular] at (1, 2) () {};
    \node[regular] at (2, 2) () {};
    }.
\end{equation}
Repeating the same argument, we obtain a ``reversed'' version of level-$1$ snake. By Corollary~\ref{corollary:reversal}, this is equal to the level-$1$ snake, proving Proposition~\ref{prop:two_to_one}.
\end{proof}

\subsection{Level-$2$ $\to$ Level-$3$}
\label{sec:two_to_three}
In Section~\ref{sec:one_to_two}, we showed that level-$2$ snakes can be obtained by applying a sequence of right-merge operations over $2\times 2$ clusters on the level-$1$ snake. We prove an analogous statement for the level-$3$ snakes in this Section. The main statement and the proof strategy share many parallels with those in Section~\ref{sec:one_to_two}. We state the key definition and statement below. Again, define the \emph{flat-diagrams}. 
\begin{definition}
\begin{equation}
\begin{aligned}
\left[ \snakethree{v}{u} \right]_{\uparrow}&:= \snaketwo{v}{u} \left[ \mergeprod{i=1}{|u-v|} \nbym{2}{2}{v+i\unitx + \unity} \right] \\
\left[ \snakethree{v}{u} \right]_{\downarrow}&:= \snaketwo{v+\unity}{u+\unity} \left[\mergeprod{i=0}{|u-v|-1} \nbym{2}{2}{u-i\unitx} \right]
\end{aligned}
\end{equation}
\end{definition}

The main result of this Section is the following proposition.
\begin{restatable}[]{proposition}{propthree}
\label{prop:two_to_three}
\begin{equation}
    \begin{aligned}
    \snakethree{v}{u} &=  \left[ \snakethree{v}{u} \right]_{\uparrow}\\
    &= \left[ \snakethree{v}{u} \right]_{\downarrow}.
    \end{aligned}
\end{equation}
\end{restatable}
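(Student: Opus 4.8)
The plan is to run the argument of Section~\ref{sec:one_to_two} one level higher, with the level-$2$ snake playing the role that the level-$1$ snake played there, and with Lemma~\ref{lemma:max_entropy_uniqueness} again supplying the final uniqueness step. Following the level-$2$ template, I would first introduce a \emph{hooked} level-$3$ diagram
\begin{equation}
    \left[\snakethree{v}{u}\right]_{\uparrow,\text{hook}} := \snaketwohooked{v}{u}\left[\mergeprod{i=1}{|u-v|}\nbym{2}{2}{v+i\unitx+\unity}\right],
\end{equation}
obtained from $\left[\snakethree{v}{u}\right]_{\uparrow}$ by replacing the level-$2$ snake with the hooked level-$2$ snake, so that the leftmost column already carries all three rows. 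The first step is to prove the analogue of Lemma~\ref{lemma:leveltwo_hook_nohook}, namely that the hooked and flat level-$3$ diagrams coincide. As before, I would commute the right-merges of the $2\times 2$ clusters in the top two rows as far to the left as possible; the two merging sequences then differ only over the first few fundamental marginals, and those are shown to agree using the Markovian constraints in $\mathcal{C}_M$ together with the monotonicity and reverse-monotonicity moves already collected in Section~\ref{sec:snakes}.

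Next I would establish the two structural lemmas that drive the maximum-entropy argument. The first is a recursion mirroring Lemma~\ref{lemma:recursion_level_two},
\begin{equation}
    \text{Tr}_{v,\,v+\unity,\,v+2\unity}\left(\left[\snakethree{v}{u}\right]_{\uparrow,\text{hook}}\right) = \left[\snakethree{v+\unitx}{u}\right]_{\uparrow,\text{hook}},
\end{equation}
stating that tracing out the full three-vertex leftmost column shifts the hooked diagram by one site; this follows from the same rearrangement of right-merges used in the hook-flat equivalence. The second is the local-consistency lemma mirroring Lemma~\ref{lemma:recursion_level_two_local},
\begin{equation}
    \text{Tr}_{t:\,t_x>v_x+1}\left(\left[\snakethree{v}{u}\right]_{\uparrow,\text{hook}}\right) = \nbym{2}{3}{v+\unitx},
\end{equation}
asserting that the marginal on the leftmost two columns is the genuine $2\times 3$ fundamental marginal.

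With these lemmas in place the endgame is identical to the proof of Proposition~\ref{prop:one_to_two}. Writing $A$ for the first column, $B$ for the second column, and $C$ for everything else, I would express the hooked diagram as $\Phi^{(3)}_{v,u}\!\left(\nbym{2}{3}{v+\unitx}\right)$ for a channel $\Phi^{(3)}_{v,u}$ acting on the second column, so that the local-consistency lemma together with Lemma~\ref{lemma:local_lemma} gives $I(A:C|B)=0$; that is, the hooked level-$3$ diagram is a quantum Markov chain across the three-vertex columns. The entropy decomposition of Lemma~\ref{lemma:entropy_decomposition}, whose path is now a sequence of three-vertex columns, then lets me iterate the recursion to show that the entropy of the hooked diagram equals its Markov entropy decomposition, which is exactly the entropy of the level-$3$ snake. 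Since both states are consistent on every marginal appearing in the decomposition and both saturate the Markov entropy bound, Lemma~\ref{lemma:max_entropy_uniqueness} forces $\snakethree{v}{u} = \left[\snakethree{v}{u}\right]_{\uparrow,\text{hook}}$, and the hook-flat equivalence upgrades this to $\snakethree{v}{u} = \left[\snakethree{v}{u}\right]_{\uparrow}$. The downward identity follows verbatim after a $\pi$-rotation of every diagram.

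I expect the local-consistency lemma to be the main obstacle. The conditioning region has grown from a two-vertex column to a full three-vertex column spanning the entire height of a $3\times 3$ cluster, so the simple two-row conditional-independence relations used in Section~\ref{sec:one_to_two} no longer suffice; one must instead invoke the column-wide level-$3$ Markov property established in Section~\ref{sec:snakes} and thread together a longer chain of monotonicity and reverse-monotonicity moves to reduce the leftmost marginal to $\nbym{2}{3}{v+\unitx}$. Keeping track of which subsets play the roles of $A$, $B$, and $C$ through this chain --- and verifying that each intermediate merge is genuinely a merge in the sense of Lemma~\ref{lemma:merging} --- is where the real bookkeeping lies.
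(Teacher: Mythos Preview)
Your proposal is correct and follows essentially the same route as the paper: the paper likewise introduces the hooked level-$3$ diagram, proves the hook--flat equivalence (Lemma~\ref{lemma:levelthree_hook_nohook}), the column-removal recursion (Lemma~\ref{lemma:snake_to_snake_three}), and the local-consistency lemma (Lemma~\ref{lemma:snakethree_local}), and then closes with the $\Phi^{(3)}_{v,u}$/Lemma~\ref{lemma:local_lemma}/Lemma~\ref{lemma:max_entropy_uniqueness} endgame exactly as you describe. Your instinct that the local-consistency lemma is the crux is also right; in the paper it is handled not by a long chain of reverse-monotonicity moves but by a single application of the merging lemma to $\shearedTikZmini{\mycustomgrid{4}{3}{} \node[regular] at (1,1) () {}; \node[regular] at (2,1) () {}; \node[regular] at (3,1) () {}; \node[regular] at (1,2) () {}; \node[regular] at (2,2) () {}; \node[regular] at (3,2) () {}; \node[regular] at (1,3) () {}; \node[regular] at (2,3) () {};}$ and $\shearedTikZmini{\mycustomgrid{4}{3}{}  \node[regular] at (2,1) () {}; \node[regular] at (3,1) () {};  \node[regular] at (2,2) () {}; \node[regular] at (3,2) () {}; \node[regular] at (4,1) () {}; \node[regular] at (4,2) {};}$ followed by the column-wide level-$3$ Markov condition from Section~\ref{sec:snakes}.
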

\noindent
As in Section~\ref{sec:one_to_two}, we focus on proving the first equality in Proposition~\ref{prop:two_to_three}. The second equality follows the same logic, by rotating the diagrams involved in the argument by $\pi$. 

It will be convenient to consider, again, the \emph{hooked diagrams}:
\begin{equation}
   \snakethree{v}{u} := \snaketwohooked{v}{u} \left[ \mergeprod{i=1}{|u-v|} \nbym{2}{2}{v+i\unitx + \unity} \right],
\end{equation}
where
\begin{equation}
    \snaketwohooked{v}{u} = \snaketwohooked{v}{u-\unitx} \rightmerge
    \shearedTikZ{\mycustomgrid{2}{3}{u};
    \node[regular] at (1,1) () {};
    \node[regular] at (2,1) () {};
    \node[regular] at (1,2) () {};
    \node[regular] at (2,2) () {};
    }.
\end{equation}
Let us remark that the hooked diagram is equal to the flat diagram.
\begin{lemma}
\begin{equation}
    \left[\snakethree{v}{u}\right]_{\uparrow} = \left[\snakethree{v}{u}\right]_{\uparrow, \text{hook}}. \label{eq:levelthree_hook_nohook}
\end{equation}
\label{lemma:levelthree_hook_nohook}
\end{lemma}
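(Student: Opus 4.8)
The plan is to imitate, level by level, the proof of Lemma~\ref{lemma:leveltwo_hook_nohook}, now with every object raised by one row. Both sides of Eq.~\eqref{eq:levelthree_hook_nohook} are obtained by right-merging the \emph{same} top-row sequence $\mergeprod{i=1}{|u-v|} \nbym{2}{2}{v+i\unitx+\unity}$ onto a base: the flat level-$2$ snake $\snaketwo{v}{u}$ on the left and the hooked level-$2$ snake $\snaketwohooked{v}{u}$ on the right. Using Eq.~\eqref{eq:snakes_merge_product} I would first expand each base into its constituent $2\times 2$ merge product, so that each side of Eq.~\eqref{eq:levelthree_hook_nohook} becomes a single long merging sequence of $2\times 2$ fundamental marginals, the two sequences being identical except for the extra hook carried by $\snaketwohooked{v}{u}$.

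First I would rearrange both sequences by commuting each top-row right-merge $\nbym{2}{2}{v+i\unitx+\unity}$ as far to the left as its support allows, exactly as in the proof of Lemma~\ref{lemma:leveltwo_hook_nohook}. Because a top-row $2\times 2$ block has support disjoint from all but the two columns it sits above, these commutations leave the flat and hooked sequences coinciding everywhere except over a bounded segment of the left-most columns. This reduces the proposition to a purely local identity, in direct analogy with Eq.~\eqref{eq:merging_hook_unhook_comparison_local}.

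Second, I would verify that local identity by showing that both the flat and the hooked initial segments equal the same small fundamental marginal. The only genuine discrepancy is the hook vertex present in $\snaketwohooked{v}{u}$ but absent in $\snaketwo{v}{u}$; once the left-most top-row $2\times 2$ block is merged on, this vertex is filled in consistently on both sides. The necessary equalities follow from the conditional-independence relations in $\mathcal{C}_M$ through the monotonicity and reverse-monotonicity moves of Section~\ref{sec:mono} combined with the Petz characterization~\eqref{eq:petz_theorem}, precisely as Eqs.~\eqref{eq:rev1}, \eqref{eq:littlesnake1}, and~\eqref{eq:eq01} supplied the corresponding facts in the level-$2$ argument; here one invokes the $\pi$-rotated three-column relations appropriate to the taller geometry.

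The main obstacle is the support bookkeeping: one must track the supports of all the $2\times 2$ blocks precisely enough to justify each commutation and to isolate exactly the bounded left-most region over which the two sequences differ. Once that region is pinned down, the remaining verification is a finite check that closes via the Petz characterization~\eqref{eq:petz_theorem} and the moves of Section~\ref{sec:mono}, appealing only to conditional-independence relations already guaranteed by $\mathcal{C}_M$. Because the geometry is one row taller than in Section~\ref{sec:one_to_two}, this local region is correspondingly larger, but no idea beyond those already used for the level-$2$ snake is required.
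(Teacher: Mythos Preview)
Your proposal is correct and follows essentially the same approach as the paper: commute the top-row $2\times 2$ right-merges as far left as their supports allow, reducing the comparison to a bounded left-most segment (the paper's Eq.~\eqref{eq:823}), and then verify that both the flat and hooked initial segments equal the same fundamental marginal using conditional-independence relations from $\mathcal{C}_M$ via monotonicity and Petz (the paper's Eq.~\eqref{eq:849}). The only cosmetic difference is that the paper keeps the level-$2$ base snakes implicit rather than first expanding them into their $2\times 2$ constituents, but the underlying argument is identical.
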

\begin{proof}
Let us rearrange the two merging sequences associated with the left- and the right-hand-side of Eq.~\eqref{eq:levelthree_hook_nohook}, by commuting through the right-merges of the $2\times 2$ fundamental marginals all the way to the left. Then the two merging sequences differ only over the first three fundamental marginals, \emph{i.e.,}
\begin{equation}
    \begin{aligned}
    \left( \shearedTikZ{
    \mycustomgrid{3}{3}{v+2\unitx};
    \node[regular] at (1,1) () {};
    \node[regular] at (2,1) () {};
    \node[regular] at (1,2) () {};
    \node[regular] at (2,2) () {};
    } 
    \rightmerge
    \shearedTikZ{
    \mycustomgrid{3}{3}{v+2\unitx};
    \node[regular] at (2,1) () {};
    \node[regular] at (2,2) () {};
    \node[regular] at (3,1) () {};
    \node[regular] at (3,2) () {};
    }
    \right)
    \rightmerge
    \shearedTikZ{
    \mycustomgrid{3}{3}{v+2\unitx};
    \node[regular] at (1,2) () {};
    \node[regular] at (2,2) () {};
    \node[regular] at (1,3) () {};
    \node[regular] at (2,3) () {};
    } \,\,\, &\text{ for }
    \left[ \snakethree{v}{u} \right]_{\uparrow},
    \\
    \left( \shearedTikZ{
    \mycustomgrid{3}{3}{v+2\unitx};
    \node[regular] at (1,1) () {};
    \node[regular] at (2,1) () {};
    \node[regular] at (1,2) () {};
    \node[regular] at (2,2) () {};
    \node[regular] at (1,3) () {};
    } 
    \rightmerge
    \shearedTikZ{
    \mycustomgrid{3}{3}{v+2\unitx};
    \node[regular] at (2,1) () {};
    \node[regular] at (2,2) () {};
    \node[regular] at (3,1) () {};
    \node[regular] at (3,2) () {};
    }
    \right)
    \rightmerge
    \shearedTikZ{
    \mycustomgrid{3}{3}{v+2\unitx};
    \node[regular] at (1,2) () {};
    \node[regular] at (2,2) () {};
    \node[regular] at (1,3) () {};
    \node[regular] at (2,3) () {};
    }
    \,\,\, &\text{ for } 
    \left[ \snakethree{v}{u} \right]_{\uparrow, \text{hook}}.
    \end{aligned}
    \label{eq:823}
\end{equation}
Both of them are equal to the fundamental marginal $\shearedTikZmini{
\mycustomgrid{3}{3}{};
\node[regular] at (1,1) () {};
\node[regular] at (2,1) () {};
\node[regular] at (3,1) () {};
\node[regular] at (1,2) () {};
\node[regular] at (2,2) () {};
\node[regular] at (3,2) () {};
\node[regular] at (1,3) () {};
\node[regular] at (2,3) () {};
}$ anchored at $v+2\unitx$. To see why, note the following two conditional independence conditions for establishing this fact for the first case:
\begin{equation}
    \begin{aligned}
    &\shearedTikZ{\mycustomgrid{3}{3}{v+2\unitx};
    \node[disk] at (1,1) () {};
    \node[disk] at (1,2) () {};
    \node[disk] at (1,3) () {};
    \node[disk] at (2,3) () {};
    \node[square] at (2,1) () {};
    \node[square] at (2,2) () {};
    \node[triangle] at (3,1) () {};
    \node[triangle] at (3,2) () {};
    }
    \mono
    \shearedTikZ{\mycustomgrid{3}{3}{v+2\unitx};
    \node[disk] at (1,1) () {};
    \node[disk] at (1,2) () {};
    \node[square] at (2,1) () {};
    \node[square] at (2,2) () {};
    \node[triangle] at (3,1) () {};
    \node[triangle] at (3,2) () {};
    } \,\, \text{ and } \\
    &\shearedTikZ{\mycustomgrid{3}{3}{v+2\unitx};
    \node[disk] at (1,3) () {};
    \node[disk] at (2,3) () {};
    \node[square] at (1,2) () {};
    \node[square] at (2,2) () {};
    \node[triangle] at (1,1) () {};
    \node[triangle] at (2,1) () {};
    \node[triangle] at (3,1) () {};
    \node[triangle] at (3,2) () {};
    },
    \end{aligned}
\end{equation}
from which the equivalence (between the fundamental marginal and the first line of Eq.~\eqref{eq:823}) follows. For the second case, note that
\begin{equation}
    \begin{aligned}
    &\shearedTikZ{\mycustomgrid{3}{3}{v+2\unitx};
    \node[disk] at (1,1) () {};
    \node[disk] at (1,2) () {};
    \node[disk] at (1,3) () {};
    \node[disk] at (2,3) () {};
    \node[square] at (2,1) () {};
    \node[square] at (2,2) () {};
    \node[triangle] at (3,1) () {};
    \node[triangle] at (3,2) () {};
    }
    \mono
    \shearedTikZ{\mycustomgrid{3}{3}{v+2\unitx};
    \node[disk] at (1,1) () {};
    \node[disk] at (1,2) () {};
    \node[disk] at (1,3) () {};
    \node[square] at (2,1) () {};
    \node[square] at (2,2) () {};
    \node[triangle] at (3,1) () {};
    \node[triangle] at (3,2) () {};
    } \,\, \text{ and } \\
    &\shearedTikZ{\mycustomgrid{3}{3}{v+2\unitx};
    \node[disk] at (1,3) () {};
    \node[disk] at (2,3) () {};
    \node[square] at (1,2) () {};
    \node[square] at (2,2) () {};
    \node[triangle] at (1,1) () {};
    \node[triangle] at (2,1) () {};
    \node[triangle] at (3,1) () {};
    \node[triangle] at (3,2) () {};
    }
    \mono
    \shearedTikZ{\mycustomgrid{3}{3}{v+2\unitx};
    \node[square] at (1,3) () {};
    \node[disk] at (2,3) () {};
    \node[square] at (1,2) () {};
    \node[square] at (2,2) () {};
    \node[triangle] at (1,1) () {};
    \node[triangle] at (2,1) () {};
    \node[triangle] at (3,1) () {};
    \node[triangle] at (3,2) () {};
    }.
    \end{aligned}
    \label{eq:849}
\end{equation}
Thus, the main claim is proved.
\end{proof}

Now, similar to what we did in Section~\ref{sec:one_to_two}, we will prove a certain recursion relation. Specifically, we wish to use the following two facts: (i) tracing out the first column yields a shorter hooked diagram and (ii) the marginal of the hooked diagram over the first two columns is equal to the fundamental marginal defined on that cluster.

\begin{lemma}
\begin{equation}
\text{Tr}_{v, v+\unity, v+2\unity}\left( \left[ \snakethree{v}{u} \right]_{\uparrow, \text{hook}} \right)
= \left[ \snakethree{v+\unitx}{u} \right]_{\uparrow, \text{hook}}
\end{equation}
\label{lemma:snake_to_snake_three}
\end{lemma}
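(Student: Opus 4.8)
The plan is to transcribe the proof of Lemma~\ref{lemma:recursion_level_two} one level higher. The hooked level-$3$ snake has a structure entirely parallel to the hooked level-$2$ snake: a hooked level-$2$ base followed by a sequence of flat right-merges of $2\times 2$ fundamental marginals placed one row up. Since the rearrangement that commutes these flat additions to the far left is already established in the proof of Lemma~\ref{lemma:levelthree_hook_nohook}, the only genuinely new content is a single local identity at the left boundary.

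First I would apply exactly that rearrangement to the merge product defining $\left[ \snakethree{v}{u} \right]_{\uparrow, \text{hook}}$, commuting every flat $2\times 2$ right-merge as far left as it will go. This separates the leftmost block---the portion whose support meets the first column $\{v, v+\unity, v+2\unity\}$---from the remainder of the merging sequence, all of whose right-merges are then supported strictly to the right of column $v$. Next I would push $\text{Tr}_{v, v+\unity, v+2\unity}$ inward: because the partial trace is supported on the first column while the remaining right-merges act only on columns to its right, it commutes through all of them and ends up acting on the isolated leftmost block alone.

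At that point the lemma reduces to the purely local claim that tracing out the first column of the reconstructed leftmost block returns the base cluster of the shorter hooked snake $\left[ \snakethree{v+\unitx}{u} \right]_{\uparrow, \text{hook}}$. This is precisely the $3\times 3$ reconstruction recorded in the first line of Eq.~\eqref{eq:823}, whose output is the eight-vertex marginal (the full $3\times 3$ cluster minus its top-right corner) anchored at $v+2\unitx$. Deleting the first column of that eight-vertex marginal leaves the five-vertex configuration spanning columns $v_x+1$ and $v_x+2$, which is exactly a $2\times 2$ block carrying a hook on its upper-left vertex---i.e.\ the hooked base anchored at $v+\unitx$. This is the role played by Eq.~\eqref{eq:eq01} in the level-$2$ argument.

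The main obstacle is bookkeeping at the hook. Unlike the level-$2$ case, the base carries an extra hook vertex, and one must confirm that after the partial trace this hook survives in the correct position, so that the surviving configuration is genuinely the base of $\left[ \snakethree{v+\unitx}{u} \right]_{\uparrow, \text{hook}}$ rather than a misaligned or truncated cluster. Concretely, one should verify that the conditional independence relations of Eqs.~\eqref{eq:823} and~\eqref{eq:849}---and in particular the monotonicity moves invoked there---remain applicable once the first column has been removed. This is the only step requiring real care; the rest is a direct transcription of Lemma~\ref{lemma:recursion_level_two}.
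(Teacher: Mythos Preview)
Your proposal is correct and follows essentially the same route as the paper: rearrange the merges, identify the leftmost block with the eight-vertex fundamental marginal via the local identity (the paper cites Eq.~\eqref{eq:849}, i.e.\ the \emph{second} line of Eq.~\eqref{eq:823}, since the base here is hooked---not the first line), and then trace out the first column. Your closing worry is unnecessary: once that block is identified with the eight-vertex fundamental marginal, its partial trace over the first column is \emph{by definition} the five-vertex fundamental marginal, which is exactly the hooked base anchored at $v+2\unitx$, so no further conditional-independence checks are needed after the trace.
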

\begin{proof}
Rearrange the right-merges by commuting through the right-merges of the $2\times 2$ marginals with an anchoring point having $y$-coordinate of $v_y+1$, all the way to the left. After this rearrangement, the first two right-merges yields the fundamental marginal $\shearedTikZmini{
\mycustomgrid{3}{3}{};
\node[regular] at (1,1) () {};
\node[regular] at (2,1) () {};
\node[regular] at (3,1) () {};
\node[regular] at (1,2) () {};
\node[regular] at (2,2) () {};
\node[regular] at (3,2) () {};
\node[regular] at (1,3) () {};
\node[regular] at (2,3) () {};
}$ anchored at $v+2\unitx$; see Eq.~\eqref{eq:849}. Upon tracing out the first column, we obtain $\shearedTikZmini{
\mycustomgrid{2}{3}{};
\node[regular] at (1,1) () {};
\node[regular] at (2,1) () {};
\node[regular] at (1,2) () {};
\node[regular] at (2,2) () {};
\node[regular] at (1,3) () {};
}$ anchored at $v+2\unitx$, completing the proof.
\end{proof}

\begin{lemma}
\begin{equation}
    \text{Tr}_{t: t_x>v_x+1} \left( \left[\snakethree{v}{u} \right] \right) = \nbym{2}{3}{v+\unitx}
\end{equation}
\label{lemma:snakethree_local}
\end{lemma}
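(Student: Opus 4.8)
The plan is to mirror the proof of Lemma~\ref{lemma:recursion_level_two_local} step for step, promoting every two-row object to its three-row analog; by Lemma~\ref{lemma:levelthree_hook_nohook} it makes no difference whether we read $\left[\snakethree{v}{u}\right]$ as the hooked or the flat diagram, so I would work with the hooked one. First I would rearrange the right-merges, commuting the $2\times 2$ marginals that erect the third row (those anchored at height $v_y+1$) as far to the left as possible, exactly as in the proofs of Lemma~\ref{lemma:levelthree_hook_nohook} and Lemma~\ref{lemma:snake_to_snake_three}. After this rearrangement the marginal on the left-most $2\times 3$ cluster becomes insensitive to everything happening beyond a bounded window near the left end, so that $\text{Tr}_{t:t_x>v_x+1}\left(\left[\snakethree{v}{u}\right]\right)$ equals the reduced density matrix, over that left-most $2\times 3$ cluster, of a small merge product: the hooked $2\times 3$ marginal right-merged with a constant number of $2\times 2$ clusters reaching to the right and up.

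Next I would collapse this local merge product down to $\nbym{2}{3}{v+\unitx}$. As in the level-$2$ case the only tools needed are the merging lemma (Lemma~\ref{lemma:merging}) and monotonicity moves. Concretely, tracing out the vertices that protrude past the left-most $2\times 3$ block (the hook vertex together with the right-reaching parts of the merged $2\times 2$'s), I would repeatedly invoke conditional-independence relations of the form $\rho \mono \rho'$ — the three-row analogs of Eqs.~\eqref{eq:rev1} and~\eqref{eq:littlesnake1}, several of which were already recorded in Eq.~\eqref{eq:849} — to certify both that each merge leaves the $2\times 3$ marginal undisturbed and that consecutive marginals are locally consistent, so that the merging lemma applies. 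Iterating these reductions telescopes the merge product to the single fundamental marginal $\nbym{2}{3}{v+\unitx}$, which is the claim.

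The main obstacle is purely combinatorial bookkeeping. With three rows the hooked construction has more protruding vertices and larger overlap regions than in the level-$2$ setting, so the intermediate diagrams multiply and one must verify, for each absorption-or-removal step, that the required $\mono$ relation is a genuine consequence of the Markov conditions $\mathcal{C}_M$. Once the correct local merge product is pinned down and this finite list of monotonicity relations is checked, the telescoping reduction is routine and the lemma follows.
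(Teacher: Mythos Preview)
Your proposal is correct and follows essentially the same route as the paper: work with the hooked diagram, rearrange the right-merges so that only a bounded window near the left end matters, and then collapse that local merge product to the $2\times 3$ fundamental marginal using the merging lemma together with monotonicity relations (the paper indeed uses exactly the conditions from Eq.~\eqref{eq:849} plus one additional $\petz$ move at the end). The only difference is that the paper is explicit about the local merge product---it is $\left(\shearedTikZmini{\mycustomgrid{4}{3}{} \node[regular] at (1,1) () {}; \node[regular] at (2,1) () {}; \node[regular] at (3,1) () {}; \node[regular] at (1,2) () {}; \node[regular] at (2,2) () {}; \node[regular] at (3,2) () {}; \node[regular] at (1,3) () {}; \node[regular] at (2,3) () {};} \rightmerge \shearedTikZmini{\mycustomgrid{4}{3}{}  \node[regular] at (3,1) () {};  \node[regular] at (3,2) () {}; \node[regular] at (4,1) () {}; \node[regular] at (4,2) {};}\right) \rightmerge \shearedTikZmini{\mycustomgrid{4}{3}{}  \node[regular] at (2,2) () {};  \node[regular] at (2,3) () {}; \node[regular] at (3,2) () {}; \node[regular] at (3,3) {};}$---whereas you leave this to be worked out.
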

\begin{proof}
Again use the rearrangement discussed in the proof of Lemma~\ref{lemma:snake_to_snake_three}. The result of this partial trace is equal to the reduced density matrix of $\left(\shearedTikZmini{\mycustomgrid{4}{3}{} \node[regular] at (1,1) () {}; \node[regular] at (2,1) () {}; \node[regular] at (3,1) () {}; \node[regular] at (1,2) () {}; \node[regular] at (2,2) () {}; \node[regular] at (3,2) () {}; \node[regular] at (1,3) () {}; \node[regular] at (2,3) () {};} \rightmerge \shearedTikZmini{\mycustomgrid{4}{3}{}  \node[regular] at (3,1) () {};  \node[regular] at (3,2) () {}; \node[regular] at (4,1) () {}; \node[regular] at (4,2) {};}\right) \rightmerge \shearedTikZmini{\mycustomgrid{4}{3}{}  \node[regular] at (2,2) () {};  \node[regular] at (2,3) () {}; \node[regular] at (3,2) () {}; \node[regular] at (3,3) {};}$ over the left-most $2\times 3$ cluster. (The bottom-right corner of these diagrams are $v+3\unitx$.)  Note that the partial trace of the fourth column effectively removes the first right-merge. This is because we can use the following conditional independence relations to merge $\shearedTikZmini{\mycustomgrid{4}{3}{} \node[regular] at (1,1) () {}; \node[regular] at (2,1) () {}; \node[regular] at (3,1) () {}; \node[regular] at (1,2) () {}; \node[regular] at (2,2) () {}; \node[regular] at (3,2) () {}; \node[regular] at (1,3) () {}; \node[regular] at (2,3) () {};}$ and $\shearedTikZmini{\mycustomgrid{4}{3}{}  \node[regular] at (2,1) () {}; \node[regular] at (3,1) () {};  \node[regular] at (2,2) () {}; \node[regular] at (3,2) () {}; \node[regular] at (4,1) () {}; \node[regular] at (4,2) {};}$ (The bottom-right corners are again $v+3\unitx$.):
\begin{equation}
    \shearedTikZ{\mycustomgrid{4}{3}{v+3\unitx} 
    \node[disk] at (1,1) () {}; 
    \node[square] at (2,1) () {}; 
    \node[triangle] at (3,1) () {}; 
    \node[disk] at (1,2) () {}; 
    \node[square] at (2,2) () {}; 
    \node[triangle] at (3,2) () {}; 
    \node[disk] at (1,3) () {}; 
    \node[disk] at (2,3) () {};}
    \,\, \text{and} \,\, 
    \shearedTikZ{\mycustomgrid{4}{3}{v+3\unitx}  
    \node[disk] at (2,1) () {}; 
    \node[square] at (3,1) () {};  
    \node[disk] at (2,2) () {}; 
    \node[square] at (3,2) () {}; 
    \node[triangle] at (4,1) () {}; 
    \node[triangle] at (4,2) {};}.
\end{equation}
After this partial trace, we obtain
\begin{equation}
\shearedTikZ{
\mycustomgrid{3}{3}{v+2\unitx};
\node[triangle] at (1,1) () {};
\node[triangle] at (2,1) () {};
\node[triangle] at (3,1) () {};
\node[triangle] at (1,2) () {};
\node[triangle] at (1,3) () {};
\node[square] at (2,2) () {};
\node[square] at (2,3) () {};
\node[square] at (3,2) () {};
\node[disk] at (3,3) () {};
}
\petz
    \shearedTikZ{
    \mycustomgrid{3}{3}{v+2\unitx} \node[regular] at (1,1) () {}; \node[regular] at (2,1) () {}; \node[regular] at (3,1) () {}; \node[regular] at (1,2) () {}; \node[regular] at (2,2) () {}; \node[regular] at (3,2) () {}; \node[regular] at (1,3) () {}; \node[regular] at (2,3) () {};
    }
    \rightmerge 
    \shearedTikZ{
    \mycustomgrid{3}{3}{v+2\unitx}  \node[regular] at (2,2) () {};  \node[regular] at (2,3) () {}; \node[regular] at (3,2) () {}; \node[regular] at (3,3) {};
    } = \nbym{3}{3}{v+2\unitx}.
\end{equation}
By definition, the reduced density matrix over the left-most $2\times 3$ cluster is the fundamental marginal over that cluster.
\end{proof}

Now we are in a position to prove Proposition~\ref{prop:two_to_three}.
\propthree*

\begin{proof}
Since the flat diagram is equivalent to the corresponding hooked diagram (Lemma~\ref{lemma:levelthree_hook_nohook}), it suffices to prove the equivalence of level-$3$ snakes to their respective hooked diagrams. This is what we prove below.
Note that
\begin{equation}
    \left[ \snakethree{v}{u} \right]_{\uparrow, \text{hook}} = \Phi_{v,u}^{(3)}\left( \nbym{2}{3}{v+\unitx} \right)
\end{equation}
for some quantum channel $\Phi_{v,u}^{(3)}$ from the the second column of the diagram to itself and the columns that extend all the way up to $u$. Moreover, this channel has the property that (i) upon tracing out all the vertices from the third to the last column, one obtains the fundamental marginal over a $2\times 3$ cluster anchored at $v+\unitx$ (Lemma~\ref{lemma:snakethree_local}) and (ii) upon tracing out the first column, one obtains a shorter hooked diagram (Lemma~\ref{lemma:snake_to_snake_three}). Thus, we can apply Lemma~\ref{lemma:local_lemma} to conclude that the hooked diagram forms a Markov chain. Using the entropy decomposition for the Markov chain iteratively, we find that the entropy of the hooked diagram is equal to its Markov entropy decomposition. (The underlying path is a sequence of columns from $v_x$ to $u_x$.) The same entropy decomposition applies to the the level-$3$ snake; see Lemma~\ref{lemma:entropy_decomposition}. Since both states are consistent on the marginals that appear in the Markov entropy decomposition and both are maximum-entropy states, by Lemma~\ref{lemma:max_entropy_uniqueness}, the main claim follows.
\end{proof}

\section{Main proofs}
\label{sec:final}
In this Section, we combine the results discussed in Section~\ref{sec:recursion} to obtain our main results. Here are the three key statements.
\propone*
\proptwo*
\propthree*

With these three statements together, we can show that the level-$3$ snake in the vertical direction forms a quantum Markov chain.
\begin{lemma}
\label{lemma:snake_vertical}
\begin{equation}
    \snakethreecmi{u}{v}
\end{equation}
\end{lemma}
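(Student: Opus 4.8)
The plan is to read the level-$3$ snake as a tripartite state $\rho_{ABC}$ in which $A$ is the top row (red triangles), $B$ is the middle row (green squares), and $C$ is the bottom row (blue disks), and to show $I(A:C|B)_\rho = 0$. The cleanest route is Lemma~\ref{lemma:local_lemma}: it suffices to produce a single channel $\Phi_{B\to BC}$ satisfying $\text{Tr}_C(\Phi_{B\to BC}(\rho_{AB})) = \rho_{AB}$ and $\Phi_{B\to BC}(\rho_B) = \rho_{BC}$, where $\rho_{AB}$, $\rho_B$, $\rho_{BC}$ are the marginals of the snake on the top two rows, the middle row, and the bottom two rows. Thus the whole argument reduces to identifying these three marginals with lower-level snakes and checking that one channel discharges both conditions.

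First I would identify the marginals. Applying the $\downarrow$-form of Proposition~\ref{prop:two_to_three}, $\snakethree{v}{u}$ is assembled from the top-two-row level-$2$ snake $\snaketwo{v+\unity}{u+\unity}$ by right-merging the $2\times 2$ fundamental marginals $\nbym{2}{2}{u-i\unitx}$ that graft on the bottom row; because these merges are between locally consistent marginals, they preserve the top-two-row marginal, so $\rho_{AB} = \snaketwo{v+\unity}{u+\unity}$. Symmetrically, the $\uparrow$-form builds the same state from $\snaketwo{v}{u}$ by grafting on the top row, so $\rho_{BC} = \snaketwo{v}{u}$. Tracing the top row off $\rho_{AB}$ and invoking Proposition~\ref{prop:two_to_one} then gives $\rho_B = \snakeone{v+\unity}{u+\unity}$.

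The decisive observation is that the $\downarrow$-merge sequences appearing in Proposition~\ref{prop:one_to_two} and Proposition~\ref{prop:two_to_three} use the \emph{same} family of fundamental marginals $\nbym{2}{2}{u-i\unitx}$. Reading this common sequence of right-merges as one channel $\Phi_{B\to BC}$ that acts on the middle row and outputs the bottom row, Proposition~\ref{prop:two_to_three} gives $\Phi_{B\to BC}(\rho_{AB}) = \Phi_{B\to BC}(\snaketwo{v+\unity}{u+\unity}) = \snakethree{v}{u}$, whence $\text{Tr}_C(\Phi_{B\to BC}(\rho_{AB})) = \text{Tr}_C(\snakethree{v}{u}) = \snaketwo{v+\unity}{u+\unity} = \rho_{AB}$, the first hypothesis of Lemma~\ref{lemma:local_lemma}. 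Meanwhile Proposition~\ref{prop:one_to_two} gives $\Phi_{B\to BC}(\rho_B) = \Phi_{B\to BC}(\snakeone{v+\unity}{u+\unity}) = \snaketwo{v}{u} = \rho_{BC}$, the second hypothesis. Both are met by the one channel, so Lemma~\ref{lemma:local_lemma} yields $I(A:C|B)_\rho = 0$.

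I expect the main obstacle to be purely a matter of alignment and bookkeeping: one must verify that the $\downarrow$-merge is literally the \emph{same} $B\to BC$ map whether it is fed the middle row alone or the top two rows, which is exactly what the shared factors $\nbym{2}{2}{u-i\unitx}$ guarantee once the anchoring points of $\snaketwo{v+\unity}{u+\unity}$, $\snakeone{v+\unity}{u+\unity}$, and the grafted clusters are lined up correctly. A secondary point needing care is the marginal-preservation claim $\text{Tr}_C(\snakethree{v}{u}) = \snaketwo{v+\unity}{u+\unity}$, which is not generic behavior of a Petz map but holds here because each merge is between locally consistent states, so the right-merge is consistent with both of its inputs (the consistency guaranteed by Lemma~\ref{lemma:merging}).
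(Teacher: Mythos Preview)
Your strategy matches the paper's: exhibit a merge sequence as the channel in Lemma~\ref{lemma:local_lemma}. The paper uses the $\uparrow$-merge as $\Phi_{B\to BA}$; you use its $\pi$-rotation, the $\downarrow$-merge, as $\Phi_{B\to BC}$. You also correctly isolate the decisive observation --- that the $\downarrow$-sequences in Propositions~\ref{prop:one_to_two} and~\ref{prop:two_to_three} use the same $2\times 2$ marginals --- and you correctly dispatch condition~2 via Proposition~\ref{prop:one_to_two}.

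The gap is in your identification of $\rho_{AB}$, which is the entire content of condition~1. You claim the $\downarrow$-merges ``preserve the top-two-row marginal'' because the inputs are locally consistent, citing Lemma~\ref{lemma:merging}. But Lemma~\ref{lemma:merging} requires the two Markov hypotheses $I(A':C'|B')_\rho=I(B':D'|C')_\sigma=0$, not mere local consistency. For the very first step, merging the top-two-row level-$2$ snake with the $2\times 2$ fundamental marginal anchored at $u$, the overlap is a single $2\times 1$ block in the middle row; there is no partition of that two-site overlap into $B',C'$ for which both Markov conditions are among the facts already established. So the marginal-preservation claim is exactly the nontrivial part of the lemma, and your proposed justification does not prove it.

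The paper fills this gap by \emph{switching forms} before tracing. In its orientation: write the level-$3$ snake in its $\uparrow$-form, use Proposition~\ref{prop:two_to_three} to rewrite it in the $\downarrow$-form, and only then trace out the top row --- now the trace commutes with the merges (which live on the bottom two rows), Proposition~\ref{prop:two_to_one} peels the top row off the starting level-$2$ snake, and the remaining merge product is recognized as $[\,\snaketwo{u}{v}\,]_\downarrow=\snaketwo{u}{v}$ by Proposition~\ref{prop:one_to_two}. In your orientation the fix is the mirror image: compute $\text{Tr}_C$ from the $\uparrow$-form (not the $\downarrow$-form you started with), so that the trace commutes with the merges, then apply Propositions~\ref{prop:two_to_one} and~\ref{prop:one_to_two}. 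Once that single identification $\rho_{AB}=\snaketwo{v+\unity}{u+\unity}$ is made, both conditions of Lemma~\ref{lemma:local_lemma} are automatic: $\Phi(\rho_{AB})=\rho_{ABC}$ by Proposition~\ref{prop:two_to_three}, whence $\text{Tr}_C\Phi(\rho_{AB})=\rho_{AB}$ tautologically, and $\Phi(\rho_B)=\Phi(\text{Tr}_A\rho_{AB})=\text{Tr}_A\Phi(\rho_{AB})=\rho_{BC}$ since $\Phi$ acts trivially on $A$.
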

\begin{proof}
\begin{equation}
\begin{aligned}
\text{Tr}_{t: t_y = v_y+2} \left( \left[\snakethree{u}{v} \right]_{\uparrow} \right) 
&= \text{Tr}_{t: t_y = v_y+2} \left( \left[\snakethree{u}{v} \right]_{\downarrow} \right) \\
&= \left[\snaketwo{u}{v} \right]_{\downarrow} \\
&= \snaketwo{u}{v}.
\end{aligned}
\end{equation}
From Lemma~\ref{lemma:local_lemma}, the claim follows immediately.
\end{proof}

\begin{theorem}
\label{thm:main1}
The fundamental marginals are consistent with some global state.
\end{theorem}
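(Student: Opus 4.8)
The plan is to assemble the global state as a \emph{vertical} snake whose elements are the horizontal two-row strips $\snaketwo{v}{u}$, mirroring the way the horizontal level-$3$ snake $\snakethree{v}{u}$ was built from $2\times 3$ columns. The decisive input is Lemma~\ref{lemma:snake_vertical}, which asserts exactly that the level-$3$ snake is a quantum Markov chain in the vertical direction, its bottom row being conditionally independent of its top row given the middle row. Equivalently, the level-$3$ snake on the rows $v_y, v_y+1, v_y+2$ is the unique (maximum-entropy) Markov extension of its two overlapping two-row marginals, one on rows $\{v_y,v_y+1\}$ and one on rows $\{v_y+1,v_y+2\}$. This is the precise vertical analogue of the column-merging data encoded in $\mathcal{C}_M$ that powered the horizontal construction, so the merging machinery of Section~\ref{sec:merging_algebra} can be re-run one direction up.

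Concretely, for each integer $y$ let $L^{(y)}$ be the full-width level-$2$ snake supported on rows $y$ and $y+1$. First I would verify that the family $(L^{(y)})_y$ meets the hypotheses of Definition~\ref{def:snake} in the vertical direction. Local consistency of consecutive strips $L^{(y)} \consistent L^{(y+1)}$ holds because they share the same row-$(y+1)$ marginal, which is pinned down by the local consistency of the fundamental marginals. The support condition of Definition~\ref{def:snake} is immediate, since $L^{(y)}$ and $L^{(y')}$ occupy disjoint rows whenever $|y-y'|\ge 2$. The remaining condition, $L^{(y)} \maxmerge L^{(y+1)} = L^{(y)} \rightmerge L^{(y+1)}$, is exactly the content of Lemma~\ref{lemma:snake_vertical}: that lemma identifies the right-merge of the two strips with their Markov---hence maximum-entropy---extension, namely the level-$3$ snake on rows $\{y,y+1,y+2\}$. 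I would then set $\tau := \mathbb{S}\!\left((L^{(y)})_y\right)$, truncating to a finite vertical stack with a product-state boundary (as in the remark following Eq.~\eqref{eq:main_entropy_decomposition}) so that everything stays finite.

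It then remains to check that $\tau \consistent \nbym{3}{3}{v}$ for every $v \in \Lambda$. The point is that any $3\times 3$ cluster occupies exactly three consecutive rows $v_y, v_y+1, v_y+2$. Applying the splitting property of snakes (the general statements behind Lemma~\ref{lemma:splitting} and Corollary~\ref{corollary:reversal}) to the vertical snake and tracing out the rows outside this window, the marginal of $\tau$ on these three rows is the sub-snake $L^{(v_y)} \rightmerge L^{(v_y+1)}$, which by Lemma~\ref{lemma:snake_vertical} is precisely the level-$3$ snake on those rows. Finally, the level-$3$ snake is consistent with every $3\times 3$ fundamental marginal contained in its rows: by the factorization established in Section~\ref{sec:snakes}, the restriction of the level-$3$ snake to any three consecutive columns is the iterated right-merge of the corresponding $2\times 3$ marginals, which equals $\nbym{3}{3}{v}$. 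Composing these two reductions---vertical then horizontal---yields the required consistency.

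The step I expect to be the main obstacle is the second paragraph: certifying rigorously that the two-row strips form a snake in the sense of Definition~\ref{def:snake} along a direction transverse to the one in which they were originally constructed, and that the splitting and reversal properties may legitimately be invoked in that transverse direction. Lemma~\ref{lemma:snake_vertical} is tailored to deliver the nontrivial merging condition, so the residual difficulty is careful bookkeeping: aligning the overlap rows, controlling the finite-volume truncation and its boundary, and confirming that no conditional-independence relation beyond those already proved is tacitly needed.
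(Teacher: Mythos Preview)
Your proposal is correct and follows essentially the same route as the paper: build the global state as a vertical merge product of the full-width level-$2$ snakes, with Lemma~\ref{lemma:snake_vertical} supplying the Markov-chain condition needed for the merging, and then read off consistency with every $\nbym{3}{3}{v}$ via the level-$3$ snakes. The paper's own proof is a terser version of exactly this argument; the only minor imprecision in your write-up is that the equality of the one-row overlaps of $L^{(y)}$ and $L^{(y+1)}$ is not just ``local consistency of the fundamental marginals'' but uses Proposition~\ref{prop:two_to_one} to identify each with the level-$1$ snake on that row.
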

\begin{proof}
The level-$3$ snakes are consistent with the fundamental marginals. By Lemma~\ref{lemma:snake_vertical}, the following density matrix is consistent with all the level-$3$ snakes:
\begin{equation}
    \snaketwo{u}{v} \left[ \mergeprod{i=1}{\infty} \snaketwo{u+i\unity}{v+i\unity} \right].
\end{equation}
for any $u$ and $v$ (such that they have the same $y$-coordinates and $v_x > u_x$). Taking $u = (-\infty, -\infty)$ and $v= (\infty, -\infty)$, the proof follows.
\end{proof}

\begin{theorem}
\label{thm:main2}
Let $\rho$ be the maximum-entropy state consistent with the fundamental marginals. Then 
\begin{equation}
     S(\rho) = \sum_{v\in \Lambda} \left(\nbym{2}{2}{v}  - \nbym{2}{1}{v} - \nbym{1}{2}{v} + \nbym{1}{1}{v}\right).
\end{equation}
\end{theorem}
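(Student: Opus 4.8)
The plan is to exhibit the constructed Markov state as the optimizer and then to sandwich its entropy between two evaluations of one and the same marginal-determined quantity. The state $\tau$ produced in the proof of Theorem~\ref{thm:main1} is consistent with every fundamental marginal, so $\entropy{\tau}$ is automatically a lower bound for the maximum. The matching upper bound will come from the Markov entropy decomposition of Section~\ref{sec:markov_entorpy}, which bounds $\entropy{\rho}$ for \emph{every} consistent $\rho$ by a functional of the fundamental marginals alone. The crux is to show that both bounds coincide and equal the inclusion--exclusion sum on the right-hand side.

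First I would compute $\entropy{\tau}$ directly. Recall that $\tau$ is the vertical merge product of level-$2$ snakes, $\snaketwo{u}{v}\mergeprod{i=1}{\infty}\snaketwo{u+i\unity}{v+i\unity}$. By Lemma~\ref{lemma:snake_vertical} each successive vertical merge attaches the accumulated state to the next height-$2$ strip across a single shared row, and that junction is a quantum Markov chain; hence each merge is entropy-additive, giving (schematically, before accounting for the boundary)
\begin{equation}
\entropy{\tau} = \sum_i \entropy{\snaketwo{u+i\unity}{v+i\unity}} - \sum_i \entropy{\snakeone{u+i\unity}{v+i\unity}},
\end{equation}
where the subtracted level-$1$ snakes are the overlap rows. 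I would then expand each level-$2$ and level-$1$ snake entropy through its horizontal Markov entropy decomposition (Lemma~\ref{lemma:entropy_decomposition}): a height-$2$ strip contributes per column a term $\entropy{\nbym{2}{2}{}} - \entropy{\nbym{1}{2}{}}$, and a height-$1$ strip contributes $\entropy{\nbym{2}{1}{}} - \entropy{\nbym{1}{1}{}}$. Reorganizing the doubly indexed sum so that, for each plaquette, the $2\times2$ block and its left $1\times2$ column come from the strip containing it while the bottom $2\times1$ row and the corner $1\times1$ come from the subtracted overlap row, each plaquette assembles precisely into $\entropy{\nbym{2}{2}{v}} - \entropy{\nbym{2}{1}{v}} - \entropy{\nbym{1}{2}{v}} + \entropy{\nbym{1}{1}{v}}$. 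Summing over plaquettes yields $\entropy{\tau} = \sum_{v\in\Lambda}(\cdots)$, the claimed expression.

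For optimality I would invoke the Markov entropy decomposition bound: for a growth path $\mathcal{P}$ sweeping $\Lambda$ and any $\rho \consistent \mathcal{M}$, one has $S(\rho) \le \markoventropy{\rho}{\mathcal{P}}$, and the right-hand side is fixed entirely by the fundamental marginals, so its value is identical for all consistent $\rho$. Evaluating it on $\tau$ and using that $\tau$ saturates the bound --- every strong-subadditivity inequality in the decomposition is an equality because the conditional independences of $\tau$ are exactly those supplied by the Propositions of Section~\ref{sec:recursion} and by Lemma~\ref{lemma:snake_vertical} --- gives $\markoventropy{\rho}{\mathcal{P}} = \markoventropy{\tau}{\mathcal{P}} = \entropy{\tau} = \sum_{v\in\Lambda}(\cdots)$. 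Hence $S(\rho) \le \sum_{v\in\Lambda}(\cdots)$ for every consistent $\rho$, while $\tau$ attains it, so the maximum equals the stated sum.

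The main obstacle I anticipate is the saturation step: verifying that $\tau$ genuinely is a two-dimensional quantum Markov chain adapted to $\mathcal{P}$, so that $\markoventropy{\tau}{\mathcal{P}} = \entropy{\tau}$ rather than merely $\geq$. The subtlety is that a naive vertex-by-vertex increment conditions on an L-shaped neighborhood rather than on a rectangle, so the reduction to rectangular-cluster entropies is not term-by-term; it survives only after the hierarchical (horizontal-then-vertical) regrouping, and one must check that the conditional independences from the Propositions screen each newly added site from the already-built region. I would also address the formally infinite sum, which the remark following Eq.~\eqref{eq:main_entropy_decomposition} resolves by truncating with a product-state boundary, so that all but finitely many summands vanish and the bookkeeping above becomes exact.
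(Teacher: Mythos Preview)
Your proposal is correct and follows essentially the same approach as the paper: achievability via the explicitly constructed state $\tau$ with its entropy computed through the horizontal snake decompositions (Lemma~\ref{lemma:entropy_decomposition}) combined with the vertical Markov chain (Lemma~\ref{lemma:snake_vertical}), and optimality via the Markov entropy decomposition along the row-by-row path. The only cosmetic difference is that for the upper bound the paper simply evaluates $\markoventropy{\rho}{\mathcal{P}}$ directly from the fundamental marginals to obtain the inclusion--exclusion sum, whereas you route through the equality $\markoventropy{\rho}{\mathcal{P}} = \markoventropy{\tau}{\mathcal{P}} = \entropy{\tau}$; your ``main obstacle'' (saturation of $\tau$) is already discharged by your own hierarchical computation of $\entropy{\tau}$, so no extra work is needed there.
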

\begin{proof}
This entropy is achievable. From Lemma~\ref{lemma:snake_vertical} and Lemma~\ref{lemma:entropy_decomposition}, we can decompose the maximum global entropy into the entropies of level-$2$ snakes, which can be again broken down to the entropies of the fundamental marginals, via Lemma~\ref{lemma:entropy_decomposition}.

At the same time, this entropy is the maximum entropy achievable, which follows from the Markov entropy decomposition using a path from $u=(-\infty, u_y)$ to $v= (\infty, u_y)$ for $u_y=-\infty$ and then concatenating the same path for $u_y+1, u_y+2,$ etc.
\end{proof}
\noindent
Note that the limit of the infinite merge product and infinite sum must be taken carefully. One concrete way to think about these proofs is to set $u=(-N, -M)$ and $v=(N,-M)$, let $i$ range from $1$ to $M$, and then take the $N, M\to \infty$ limit.

\section{Discussion}
\label{sec:discussion}
We generalized the solution in Ref.~\cite{Kim2021} by replacing the translational invariance condition by the local consistency condition. With this improvement, the only conditions needed to ensure the consistency of marginals are (i) their local consistency and (ii) the Markovian constraints $\mathcal{C}_M$, inherited from the entropy scaling law. Thus, symmetry plays no role in our solution; only entropy does.

\section*{Acknowledgement}
I thank Daniel Ranard for the discussion. I also thank Jiri Guth Jarkovsky for spotting a typo.

\bibliographystyle{myhamsplain2}
\bibliography{bib}
\end{document}